\journal{Nonlinear Analysis: Hybrid Systems}
\newtheorem{theorem}{Theorem}
\newtheorem{definition}[theorem]{Definition}
\newtheorem{assumption}[theorem]{Assumption}
\newtheorem{proposition}[theorem]{Proposition}
\newtheorem{remark}[theorem]{Remark}
\newcommand{\tm}{\times}%
\newcommand{\trn}{^{\scriptscriptstyle \top}}%
\def\field#1{\mathbb #1}%
\def\R{\field{R}}%
\newcommand{\N}{\mathbb{N}}%
\newcommand{\X}{\mathbb{X}}%
\newcommand{\U}{\mathbb{U}}%
\newcommand{\Y}{\mathbb{Y}}%
\newcommand{\UC}{\mathcal{U}}%
\newcommand{\diag}{\mathrm{diag}}%
\DeclareMathOperator{\id}{id}
\def\K{\mathcal{K}}%
\def\Kinf{\mathcal{K}_\infty}%
\let\ol=\overline%
\let\ul=\underline%
\begin{document}
	
	\begin{frontmatter}
		
		\title{\huge
			Compositional Construction of Abstractions for Infinite Networks of Discrete-Time Switched Systems 
		}
		
		\author[First]{Maryam~Sharifi}
		\address[First]{School of Electrical Engineering and Computer Science, KTH Royal Institute of Technology, Stockholm, Sweden}
		\ead{msharifi@kth.se}
		
		\author[Second]{Abdalla~Swikir}
		\address[Second]{Department of Electrical and Computer Engineering, Technical University of Munich, Germany}
		\ead{abdalla.swikir@tum.de}

		\author[Third]{Navid~Noroozi}
		\address[Third]{Institute for Informatics, LMU Munich, Germany}
		\ead{navid.noroozi@lmu.de}
		
		\author[Third,Fourth]{Majid~Zamani}
		\address[Fourth]{Computer Science Department, University of Colorado Boulder, USA}
		\ead{majid.zamani@colorado.edu}
		
		\cortext[corauth]{The work of N. Noroozi is supported by the DFG through the grant WI 1458/16-1. The work of M. Zamani is supported by the H2020 ERC starting grant AutoCPS (grant agreement No.804639).}
		
\begin{abstract}
In this paper, we develop a compositional scheme for the construction of \emph{continuous} approximations for interconnections of \emph{infinitely} many discrete-time switched systems.
An approximation (also known as abstraction) is itself a continuous-space system, which can be used as a replacement of the original (also known as concrete) system in a controller design process.
Having designed a controller for the abstract system, it is refined to a more detailed one for the concrete system.
We use the notion of so-called simulation functions to quantify the mismatch between the original system and its approximation.
In particular, each subsystem in the concrete network and its corresponding one in the abstract network are related through a notion of \emph{local} simulation functions.
We show that if the local simulation functions satisfy certain small-gain type conditions developed for a network containing infinitely many subsystems, then the aggregation of the individual simulation functions provides an overall simulation function quantifying the error between the overall abstraction network and the concrete one.
In addition, we show that our methodology results in a \emph{scale-free} compositional approach for any finite-but-arbitrarily large networks obtained from truncation of an infinite network.
			We provide a systematic approach to construct local abstractions and simulation functions for networks of linear switched systems.
			The required conditions are expressed in terms of linear matrix inequalities that can be efficiently computed.
			We illustrate the effectiveness of our approach through an application to AC islanded microgirds.
			
			\begin{keyword}
				Compositionality, continuous abstractions, infinite networks, small-gain theorem, switched systems. 
			\end{keyword}   
		\end{abstract}
	\end{frontmatter}
	
	\section{Introduction}
	Recent technological advances in sensing, computation, and data management have enabled us to  develop smart networked systems providing more \emph{autonomy} and \emph{flexibility}.
	Smart grids, swarm robotics, connected automated vehicles and smart manufacturing are just a few examples of such emerging smart networked systems, in which a large numbers of dispersed agents interact and communicate with each other to achieve a common objective.
	The size and the structure of such networks can be arbitrarily large, time-varying or even unknown, and agents can be \emph{constantly} plugged into and out from the network.
	Emerging control networks necessitate also \emph{sophisticated} control objectives, which go beyond standard goals pursued in classical control theory.
	For instance, a sophisticated objective is to control connected autonomous vehicles merging at a traffic intersection while ensuring safety and fuel economy constraints.
	
	The complexity of control objectives, the large number of participating agents, as well as safety concerns call for automated and provably correct techniques to verify or synthesize controllers for the emerging applications of control systems.
	A promising methodology to address the above issues is achieved by a careful integration of concepts from control theory (e.g. Lyapunov methods and small-gain theory) and those of computer science (e.g. formal methods and assume-guarantee rules)~\cite{belta2017formal,henzinger1998you}.
	Discrete abstractions (a.k.a. symbolic models) is one particular technique to provide automated synthesis of correct-by-design controllers for concrete systems. In this approach, controller synthesis problems can be algorithmically solved over finite abstractions of concrete systems by resorting to automata-theoretic approaches \cite{BLOEM2012911}.
	Then, the constructed controllers can be refined back to the original systems based on some behavioral relations between original systems and their finite abstractions such as approximate alternating simulation relations \cite{pt09} or feedback refinement relations \cite{reissig2016feedback}.
	
	The computational complexity of constructing finite abstractions of the concrete systems makes the practical applicability of these methods considerably challenging.
	Hence, applying such approaches to large-scale systems is not feasible at all.
	An appropriate technique to overcome this challenge is to introduce a pre-processing step by constructing so-called \emph{continuous} abstractions.
	In that way, a continuous-space system, but possibly with a \emph{lower} dimension, is obtained as a substitute of the concrete system~\cite{girard2009hierarchical,lavaei2017compositional,Smith.2018,Smith.2019}.
	We note that the applicability of continuous abstractions is \emph{not} limited to the context of symbolic controllers.
	In fact,  they can be used in other hierarchical control approaches in the lower layers, where a simplified model of the system is used for controller design purposes.
	
	For large-scale networks, it is often more useful to maintain the structure (i.e.\;topology) of the network while abstractions are constructed.
	In that way, corresponding to each participating subsystem of the network, a continuous abstraction is constructed individually. 
	Therefore, the complexity of synthesizing continuous abstractions of infinite-dimensional systems is managed in an efficient way.
	The methodology by which an abstraction for the overall network is achieved via the interconnection of the individual abstractions is called a compositional approach~\cite{rungger2016compositional,ZamaniArcak2017,Noroozi.2018b}.
	In order to guarantee that the aggregation of the individual abstractions provides an abstraction for the overall network, the interaction between subsystems should be weak enough which can be technically described by a small-gain condition~\cite{rungger2016compositional,ZamaniArcak2017,Noroozi.2018b,mallik2018compositional,swikir2019compositional2,swikir2019compositional}.
	
	Small-gain type conditions are intrinsically dependent on the size of the network.
	Hence, one can readily find that the satisfaction of compositionality conditions dramatically degrades as the the number of subsystems increases and may not be valid anymore, see~\cite[Remark 6.1]{ZamaniArcak2017}.
	The works in the literature regarding stability analysis of large-scale systems, e.g.~\cite{NMK21,kawan2019lyapunov,DaP20,Bamieh.2012,Barooah.2009,BPD02}, inspired us to address the scalability issue using an over-approximation of a finite-but-large network with a network composed of \emph{infinitely} many subsystems. We call such aggregated system an infinite network.
	It is widely accepted that an infinite network {captures} the essence of its corresponding finite network; see e.g. a vehicle platooning application in~\cite{Jovanovic.2005b}.
	This treatment leads to an infinite-dimensional system and calls for a more rigorous and detailed setting.
	In particular, we adapt the notion of simulation functions~\cite{girard2009hierarchical} to the case of \emph{infinite}-dimensional switched systems.
	The existence of a simulation function ensures that the error between the output trajectories of the abstract and concrete system is quantitatively bounded in a certain sense (cf. Definition~\ref{def:sim-Function}).
	By exploiting the compositionality approach, we assign an individual simulation function to each subsystem and construct the corresponding local abstraction accordingly.
	Then we aggregate them to construct an abstraction for the overall network.
	We show that if a certain small-gain condition recently developed in~\cite{kawan2019lyapunov} is satisfied, then the aggregation yields a continuous abstraction for the overall concrete network.
	Particularly, for linear networks, our conditions are expressed in terms of linear matrix inequalities, where we \emph{explicitly} construct the individual abstractions as well as the controller refinement formulation.

	Motivated by the scale-dependency issue in the classic compositionality methods, in this paper, a \emph{scale-free} compositional approach for the construction of continuous abstractions for \emph{arbitrarily} large-scale networks of discrete-time switched systems is provided.
	We elucidate the scale-free property of our approach by truncating the infinite network to a finite-but-arbitrary large network and show that the compositional abstraction results are preserved under any truncation.
	To the best of our knowledge, our work is the first to provide a scale-free compositional approach for construction of continuous abstractions.
	In addition to the scalability issue, in a large number of applications, the structure of the network is time-varying in the sense that the communication links between subsystems change over time.
	In power networks, for instance, there exist line switches  and the agents are constantly plugged into and out.
	This calls for considering switched dynamics describing this time dependency of the network structure.
	Our setting, therefore, considers an infinite network of switched systems.
	To validate the effectiveness of our approach, we apply our results to AC microgrids operating in an islanded mode.
	In particular, we show through simulations that the behavior of the network remains \emph{independent} of the size of the network, while the network size dramatically increases.
	
	This paper expands on the conference paper~\cite{Msh}, where uniformity conditions with respect to the switching modes were made. The present work provides a completely non-uniform structure for the simulation functions with respect to the switching signals in the network.
	Moreover, the scale-free property of the result is established, which leads to constructing compositional abstractions for any finite-but-arbitrarily large network.
	Therefore, the current setting allows us to consider more general and realistic scenarios, including the new AC microgrid case study.
	
	
	The rest of the paper is organized as follows. Section \ref{notation} provides the systems description. 
	In Section~\ref{sec:SIM}, we first introduce the notion of simulation functions for switched systems, and then show the importance of the existence of such functions in the construction of abstractions.
	Section~\ref{sec:SIM-cons} contains the main result of the paper, that is the construction of continuous abstractions compositionally using small-gain theory.
	In Section~\ref{linear}, we focus on linear subsystems and provide easier-to-check conditions for the construction of continuous abstractions.
	In Section~\ref{sec:Example}, we apply our results to a network of AC islanded microgrids.
	
	\section{Preliminaries and System Description}\label{notation}
	
	\subsection{Notation} 
	We write $\N_0 (\N)$ for the set of nonnegative (positive) integers.
	For vector norms on finite- and infinite-dimensional vector spaces, we write $|\cdot|$.
	By $\ell^p$, $p\in[1,\infty)$, we denote the Banach space of all real sequences $x = (x_i)_{i\in\N}$ with finite $\ell^p$-norm $|x|_p<\infty$, where $|x|_p = (\sum_{i=1}^{\infty}|x_i|^p)^{1/p}$ for $p < \infty$.
	If $X$ is a Banach space, we write $r(T)$ for the spectral radius of a bounded linear operator $T:X\to X$. 
	The identity function is denoted by $\id$.
	Throughout this work, we will consider $\K$ and $\Kinf$ comparison functions; see~\cite[Chapter 4.4]{Khalil.2002} for definitions.
	
	We consider discrete-time switched subsystems $\Sigma_i$, defined later.
	The arbitrary switching signals are defined as ${\sigma_i}:\N_0 \to S_i$ for each subsystem $\Sigma_i$, ${i\in\N}$, and $S_i=\{1,2,\dots,r_i\}$ is a finite index set with $r_i \in \N$. The set of such switching signals are denoted by ${\cal{S}}_i$. 
	\subsection{Infinite networks}
	First, we define discrete-time switched subsystems which are interconnected to form an infinite network consisting of countably infinite number of control subsystems.
	\begin{definition}\label{def1}
		A discrete-time switched system $\Sigma_i$, $i\in\N$, is defined by the tuple 
		\begin{align}
			\Sigma_i=(\mathbb{X}_i, \mathbb{W}_i, \mathbb{U}_i, \mathcal{U}_i, \mathbb{Y}_i, h_{i,s_i}, f_{i,s_i}, S_i),
		\end{align}
		where $\mathbb{X}_i\subseteq\R^{n_i}$, $\mathbb{W}_i\subseteq\R^{N_i}$, $\mathbb{U}_i\subseteq\R^{m_i}$, and $\mathbb{Y}_i\subseteq\R^{q_i}$ are the state set, internal input set, external input set, and output set, respectively. 
		We use symbol $\mathcal{U}_i$ to denote the set of functions $u_i:\N_0 \rightarrow \mathbb{U}_i$.
		Functions $f_{i,s_i}:\mathbb{X}_i\times\mathbb{W}_i\times\mathbb{U}_i\to\mathbb{X}_i$ are the transition functions for $s_i\in S_i$. Moreover, $h_{i,s_i}:\mathbb{X}_i\to\mathbb{Y}_i$ are the output maps. 
	\end{definition}
	
	The discrete-time switched subsystems $\Sigma_i$, $i\in\N$, are represented by the difference equation of the form%
	\begin{equation}\label{eq_ith_subsystem}
	{\Sigma _i}:\quad \left\{ \begin{array}{l}
	\mathbf x_i(k+1)  = {f_{i,{\sigma_i(k)}}}({\mathbf x_i(k)},{\mathbf w_i(k)},{\mathbf u_i(k)}),\\
	{\mathbf y_i(k)} = h_{i, \sigma_i(k)}(\mathbf x_i(k)),
	\end{array} \right.%
	\end{equation}
	where $\mathbf x_i:\N_0 \rightarrow  \mathbb{X}_i$, $\mathbf w_i:\N_0 \rightarrow  \mathbb{W}_i$, $\mathbf u_i:\N_0 \rightarrow  \mathbb{U}_i$, and $\mathbf{y}_i:\N_0\rightarrow   \mathbb{Y}_i$ are the state signal, internal input signal, external input signal, and output signal, respectively.
	
	The finite set $I_{i,\sigma_i(k)}^{\mathrm{in}} \subset \N \backslash \{i\}$ collects \emph{mode-dependent in-neighbors} of $\Sigma_i$, i.e. systems $\Sigma_j,j \in I_{i,\sigma_i(k)}^{\mathrm{in}}$, \emph{directly} influencing $\Sigma_i$. 
	On the other hand, the finite set $I_{i,\sigma_i(k)}^{\mathrm{out}} \subset \N$, collects \emph{mode-dependent out-neighbors} of $\Sigma_i$, i.e. $\Sigma_{j},j \in I_{i,\sigma_i(k)}^{\mathrm{out}}$, influenced by $\Sigma_i$.
	Note that we assume $i\notin I_{i,\sigma_i(k)}^{\mathrm{in}}\cup I_{i,\sigma_i(k)}^{\mathrm{out}}$, $\forall i\in \N$. 
	The input-output structure of each subsystem $\Sigma_i$, $i\in\N$, is given by
	\begin{subequations}\label{IO}
		\begin{align}
			&\mathbf w_i(k)= \left( {{\mathbf w}_{ij}(k)}\right)_{j\in I_{i,\sigma_i(k)}^{\mathrm{in}}}\in \mathbb{W}_i:= \prod_{j\in ( I_{i,\sigma_i(k)}^{\mathrm{in}})}\mathbb{W}_{ij},\label{IO1}\\
			&\mathbf y_i(k)= \left({{\mathbf y}_{ij}(k)}\right)_{j\in (i\cup I_{i,\sigma_i(k)}^{\mathrm{out}})}\in\mathbb{Y}_i:= \!\!\!\!\prod_{j\in (i\cup I_{i,\sigma_i(k)}^{\mathrm{out}})}\!\!\mathbb{Y}_{ij},\label{IO2}\\
			&{h_{i, \sigma_i(k)}}({\mathbf x_i(k)})=\left( {h_{ij,\sigma_i(k)}(\mathbf x_i(k))}\right)_{j\in (i\cup I_{i,\sigma_i(k)}^{\mathrm{out}})}.\label{IO2}
		\end{align}
	\end{subequations}
	We denote $\mathbf{w}_i(k)$ for $N_i := \sum_{j\in I_{i,\sigma_i(k)}^{\mathrm{in}}}n_j$, as the internal inputs describing the interconnections among subsystems. 
	The outputs $\mathbf{y_{ij}}(k)$, $j\in I_{i,\sigma_i(k)}^{\mathrm{out}}$, are considered as internal outputs which are used to construct interconnections between subsystems, whereas $\mathbf{y_{ii}}(k)\in\mathbb{Y}_{ii}$ are denoted as external outputs.
	Note that $\mathbf w_i(k)$ and $\mathbf y_i(k)$ are partitioned into sub-vectors and we aggregate all the subsystems $\Sigma_i$ through the interconnection constraints given by $ \mathbf w_{ij}(k)=\mathbf y_{ji}(k)$ for all $i\in\N$ and for all $j\in I_{i,\sigma_i(k)}^{\mathrm{in}}$.

	To model the state (resp. input) space of the overall network, we introduce a Banach space of sequences $x = (x_i)_{i\in\N}$ (resp. $u = (u_i)_{i\in\N}$).
	The most natural choice  is the $\ell^p$-space, precisely, defined as follows: we first fix a norm on each $\mathbb{X}_i$;
	then, for every $p \in [1,\infty)$, we put%
	\begin{equation*}
		\ell^p(\N,(n_i)) := \Bigl\{x = (x_i)_{i\in\N} : x_i \in \mathbb{X}_i,\ \sum_{i\in\N}|x_i|^p < \infty \Bigr\},%
	\end{equation*}
	and equip this space with the norm $|x|_p := (\sum_{i\in\N}| x_i|^p)^{1/p}$.
	Now, we provide a formal definition of the infinite network.
	\begin{definition}\label{sigma}
		Consider subsystems $\Sigma_i=(\X_i, \mathbb{W}_i, \U_i, \UC_i, \Y_i, h_{i,s_i}, f_{i,s_i}, S_i)$, $i\in\N$, with the input-output structure as in \eqref{IO}.
		A discrete-time infinite network $\Sigma$ is defined by the tuple $\Sigma=(\X, \U, \UC, \Y, h_s, f_s, S)$, where $\X = \ell^p(\N,(n_i)) \subset \prod_{i\in\N}\X_i$ with a fixed $p \in[0,\infty)$  and $\U = \ell^q(\N,(n_i)) \subset \prod_{i\in\N}\U_i$ with a fixed $q \in[0,\infty)$.
		The space of admissible \emph{external input functions} $\mathbf{u}$ is defined by $\UC := \bigl\{\mathbf{u}:\N_0 \rightarrow \mathbb{U}  \bigr\}$. 
		Moreover, $h_s(x)=(h_{ii,s_i}(x_i))_{i\in\N}, s\in S, S=\prod_{i\in\N}S_i$ denotes the output function, where $h_s:\mathbb{X}\to\mathbb{Y}$, $\mathbb{Y} \subset \prod_{i\in\N}\mathbb{Y}_{ii}$.
		In addition, we restrict $f_s(x,u)=(f_{i,s_i}(x_i,w_i,u_i))_{i\in\N}$ to $f_{s}:\X \tm \U\to\X$.
	\end{definition}
	
		In that way, the interconnection of subsystems $\Sigma _i$, $i\in \N$, is described by 
	\begin{equation}\label{eq_interconnection}
	{\Sigma}:\quad \left\{ \begin{array}{l}
	\mathbf x(k+1)  = {f_{\sigma(k)}}({\mathbf x(k)},{\mathbf u(k)}),\\
	{\mathbf y(k)} = h_{\sigma(k)}(\mathbf x(k)),
	\end{array} \right.%
	\end{equation}
	where $\mathbf x(k) \!\!=\!\! (\mathbf x_i(k))_{i\in\N}$,  $\mathbf u(k) \!\!=\!\! (\mathbf u_i(k))_{i\in\N}$, $\mathbf y(k) = (\mathbf y_{ii}(k))_{i\in\N}$, $\sigma(k) \!\!=\!\! (\sigma_i(k))_{i\in\N}$, $f_{\sigma(k)}(\mathbf x(k),\mathbf u(k)) = \left(f_{i,\sigma_i(k)}(\mathbf x_i(k),\mathbf w_i(k),\mathbf u_i(k))\right)_{i\in\N}$, and $h_{\sigma(k)}(\mathbf x(k)):=\left(h_{ii,\sigma_i(k)}(\mathbf x_i(k))\right)_{i\in\N}$.
	We call the overall system~\eqref{eq_interconnection} an \emph{infinite} network and denote the corresponding solutions by $\mathbf x(k,x,\sigma,\mathbf u)$ for any $k\in\N_0$, any initial value $x\in \X$, any switching signal $\sigma:\N_0\to\cal{S}$, ${\cal{S}}:=\{\sigma:{\N_0} \to S\}$, and any control input $\mathbf u\in\UC$.
	
	We refer to system \eqref{eq_interconnection} as the \emph{concrete} system, which is often hard to control or analyze. 
	To simplify the controller design process, we, instead, use a simpler and less precise system called an \emph{abstract} system.
	\section{Abstractions for Discrete-Time Switched Systems}\label{sec:SIM}
	
	In this section, we introduce a notion of simulation functions for discrete-time switched systems.
	A simulation function quantifies a relation between the concrete system and its abstraction in the sense that the mismatch between their output trajectories remains bounded (cf. Proposition~\ref{thm:SF}).
	A simulation function is formally defined as follows.
	\begin{definition}\label{def:sim-Function}
		Consider two systems $\Sigma=(\mathbb{X}, \mathbb{U}, \mathcal{U}, \mathbb{Y}, h_s, f_s, S)$ and $\hat \Sigma=(\mathbb{\hat X}, \mathbb{\hat U}, \mathcal{\hat U}, \mathbb{\hat Y}, \hat h_s, \hat f_s, S)$ with the same output space dimensions. Let $p,q \in [1,\infty)$ be given.
		Let $V_s:\X\times \mathbb{\hat X} \rightarrow \R_+, s\in S$, be a family of functions. Assume that there exist constants $\alpha, b >0$ such that for all $s \in S$, all $x \in \X$ and all $\hat x \in \mathbb{\hat X}$,
		\begin{align}\label{sim2}
			&\alpha \left| {h_s(x) - \hat h_s(\hat x)} \right|_p^b \le V_s(x,\hat x),
		\end{align}
		and there exist a function ${\rho _{\rm ext}}\in\K$ and a constant $0<\lambda<1$, such that for all $s',s\in S$
		and all $x \in \X$, $\hat x \in \mathbb{\hat X}$ and $\hat u\in \mathbb{\hat U}$, there exists  $u \in \mathbb{U}$ so that we have
		\begin{align}\label{sim3}
			&\begin{array}{l}
				V_{s'}(f_{s}(x,u),\hat f_{s}(\hat x,\hat u)) - V_{s}(x,\hat x)\\\leq
				- \lambda V_{s}(x,\hat x) + {\rho _{\rm ext}}(|\hat u|_{q}).
			\end{array} %
		\end{align}
		Functions $V_s$ satisfying \eqref{sim2} and~\eqref{sim3} are called simulation functions from $\hat \Sigma$ to $\Sigma$ and $\hat\Sigma$ is called an abstraction of $\Sigma$.%
	\end{definition}
	Now we show that the existence of a simulation function ensures that the output trajectories of the abstract and concrete systems remain within a bounded distance from each other.
	\begin{proposition}\label{thm:SF}
		Consider systems $\Sigma=(\mathbb{X}, \mathbb{U}, \mathcal{U}, \mathbb{Y}, h_s, f_s, S)$ and $\hat \Sigma=(\mathbb{\hat X}, \mathbb{\hat U}, \mathcal{\hat U}, \mathbb{\hat Y}, \hat h_s, \hat f_s, S)$ with the same output space dimensions.
		Let a set of simulation functions $V_s$, $s\in S$, from $\hat\Sigma$ to $\Sigma$ and $p,q \in [1,\infty)$ be given. Then there exist a function $\gamma_{\rm{ext}}\in\K$ and positive constants $\vartheta$ and $\beta<1$, such that for any $\sigma\in\cal{S}$, $x\in \X$, $\hat x \in \mathbb{\hat X}$, $\mathbf{\hat u}\in \hat \UC$, $k\in\N_0$, there exists $\mathbf u\in \UC$ so that we have
		\begin{align}\label{ms}
			&\left| {\mathbf y(k,x,\sigma,\mathbf u) -  \mathbf {\hat y}(k,\hat x,\sigma,\mathbf{\hat u})} \right|_p \nonumber\\&\le \vartheta\beta^k (V_{\sigma(0)}(\xi,\hat \xi))^{\frac{1}{b}}+\gamma_{\rm ext}(|\mathbf{\hat u}|_{q,
				\infty}),
		\end{align}	
		where $|\mathbf{\hat u}|_{q,\infty} := \sup_{k\in \N_0}|\mathbf{\hat u}(k)|_q$ and $b$ as in~\eqref{sim2}.
		
		\begin{proof}
			The proof follows similar arguments as those in the proof of~\cite[Lemma 3.5]{Jiang.2001}.
			Take any $\varepsilon \in (0,\lambda)$ and define ${\cal{D}} := \{(x,\hat x) \in \X \times \mathbb{\hat X}  : V_{s}(x,\hat x) \leq \frac{1}{\lambda -\varepsilon}\rho_{\rm ext}(|\hat u|_{q})\}$ for all $s\in S$.
			It follows from~\eqref{sim3} that
			\begin{align} \label{eq:diss}
				&V_{s'} (f_{s}(x,u),\hat f_{s}(\hat x,\hat u)) - V_{s}(x,\hat x)\nonumber\\ &\leq -\varepsilon V_{s}(x,\hat x) + {\rho_{\rm ext}(|\hat u|_{q}) - (\lambda -\varepsilon	) V_{s}(x,\hat x)} .
			\end{align}
			For all $(x,\hat x) \in \{\X \times \mathbb{\hat X}\} \backslash {\cal{D}}$, we have $V_{s}(x,\hat x) >\frac{1}{\lambda -\varepsilon}\rho_{\rm ext}(|\hat u|_{q})$. Thus, we have
			\begin{align*}
				&V_{s'} (f_{s}(x,u),\hat f_{s}(\hat x,\hat u)) - V_{s}(x,\hat x)\leq -\varepsilon V_{s}(x,\hat x),
			\end{align*}
			for all $(x,\hat x) \in \{\X \times \mathbb{\hat X}\} \backslash {\cal{D}}$, that can be written as
			\begin{align*}
				&V_{\sigma(k)} (\mathbf x(k,x,\sigma,\mathbf u), \mathbf {\hat x}(k,\hat x, \sigma,\mathbf{\hat u}))\\&\leq (1-\varepsilon) V_{\sigma(k-1)}(\mathbf x(k-1,x,\sigma,\mathbf u), \mathbf {\hat x}(k-1,\hat x, \sigma,\mathbf{\hat u})),
			\end{align*}
			for all $k\in \N_0$.
			Therefore, we obtain
			\begin{align} \label{eq:outside-S}
				&V_{\sigma(k)} (\mathbf x(k,x,\sigma,\mathbf u), \mathbf {\hat x}(k,\hat x, \sigma,\mathbf{\hat u}))\leq (1-\varepsilon)^{k} V_{\sigma(0)}(\xi,\hat \xi).
			\end{align}
			Now consider $(x,\hat x)\in {\cal{D}}$.
			It follows from~\eqref{eq:diss} that
			\begin{align}\label{eq:inv}
				&V_{s'} (f_{s}(x,u),\hat f_{s}(\hat x,\hat u)) \leq (1-\lambda) V_{s}(x,\hat x) + \rho_{\rm ext}(|\hat u|_{q})\nonumber\\
				& \leq   \frac{1-\lambda}{\lambda -\varepsilon}\rho_{\rm ext}(|\hat u|_{q})  + \rho_{\rm ext}(|\hat u|_{q})= \frac{1-\varepsilon}{\lambda -\varepsilon}\rho_{\rm ext}(|\hat u|_{q}).
			\end{align}
			Inequalities \eqref{eq:outside-S} and \eqref{eq:inv} imply that
			\begin{align} \label{eq:decay-eISS}
				&V_{\sigma(k)} (\mathbf x(k,x,\sigma,\mathbf u), \mathbf {\hat x}(k,\hat x,\sigma,\mathbf {\hat u}))\nonumber\\&\leq (1-\varepsilon)^{k} V_{\sigma(0)}(x,\hat x)+\frac{(1-\varepsilon)}{\lambda -\varepsilon}\rho_{\rm ext}(|\mathbf {\hat u}|_{q, \infty}).
			\end{align}
			It follows from~\eqref{sim2} and~\eqref{eq:decay-eISS} that
			\begin{align*}
				&\alpha\left| {\mathbf y(k,x,\sigma,\mathbf u) - \mathbf {\hat y}(k,\hat x, \sigma,\mathbf {\hat u})} \right|_p^b \nonumber\\&\leq (1-\varepsilon)^{k} V_{\sigma(0)}(x,\hat x)+\frac{(1-\varepsilon)}{\lambda -\varepsilon}\rho_{\rm ext}(|\mathbf {\hat u}|_{q, \infty}),
			\end{align*}
			which implies that
			\begin{align*}
				&\left| {\mathbf y(k,x,\sigma,\mathbf u) - \mathbf {\hat y}(k,\hat x, \sigma,\mathbf {\hat u})} \right|_p \nonumber\\&\leq \vartheta\beta^k (V_{\sigma(0)}(x,\hat x))^{\frac{1}{b}}+\gamma_{\rm ext}(|\mathbf {\hat u}|_{q, \infty}),%
			\end{align*}
			where $\vartheta=(2\frac{1}{\alpha})^{\frac{1}{b}}$, $\beta=(1-\varepsilon)^{\frac{1}{b}}$, $\gamma_{\rm ext}(\cdot)=(2\frac{1-\varepsilon}{\alpha(\lambda -\varepsilon)}\rho_{\rm ext}(\cdot))^{\frac{1}{b}}$. This completes the proof.
		\end{proof}
	\end{proposition}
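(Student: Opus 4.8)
The plan is to read the pair \eqref{sim2}--\eqref{sim3} as an input-to-state-stability estimate for the ``error system'' whose Lyapunov function is $V_s$, and then simply to propagate the one-step dissipation inequality \eqref{sim3} along trajectories. The only nontrivial modelling ingredient is that \eqref{sim3} does not prescribe the concrete input; it merely asserts that a suitable $u$ exists for each configuration $(s',s,x,\hat x,\hat u)$. So the first task is to manufacture an admissible signal $\mathbf u\in\UC$ realizing the decrease at every step.

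First I would construct $\mathbf u$ causally. Fix $\sigma\in{\cal S}$, the initial states $x,\hat x$, and the abstract input $\mathbf{\hat u}\in\hat\UC$. At each time $k$, knowing the current states $\mathbf x(k),\mathbf{\hat x}(k)$, the active mode $s=\sigma(k)$, the successor mode $s'=\sigma(k+1)$, and the value $\hat u=\mathbf{\hat u}(k)$, condition \eqref{sim3} supplies some $u=\mathbf u(k)\in\U$ for which the dissipation inequality holds. Selecting such a $u$ recursively defines a signal $\mathbf u\in\UC$ and, simultaneously, pins down the closed-loop state sequence $\mathbf x(\cdot,x,\sigma,\mathbf u)$ driven by it.

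Next I would abbreviate $\mathcal V_k := V_{\sigma(k)}(\mathbf x(k,x,\sigma,\mathbf u),\mathbf{\hat x}(k,\hat x,\sigma,\mathbf{\hat u}))$ and read off from \eqref{sim3}, evaluated along the inputs just chosen, the scalar recursion
\begin{align*}
\mathcal V_{k+1}\le (1-\lambda)\mathcal V_k+\rho_{\rm ext}(|\mathbf{\hat u}(k)|_q).
\end{align*}
Since $0<1-\lambda<1$, unrolling this recursion and bounding $|\mathbf{\hat u}(k)|_q\le|\mathbf{\hat u}|_{q,\infty}$ gives, after summing the resulting geometric series,
\begin{align*}
\mathcal V_k\le (1-\lambda)^k\mathcal V_0+\tfrac{1}{\lambda}\rho_{\rm ext}(|\mathbf{\hat u}|_{q,\infty}).
\end{align*}
I would then invoke the lower bound \eqref{sim2} to replace $\mathcal V_k$ by $\alpha|\mathbf y(k)-\mathbf{\hat y}(k)|_p^b$, take $b$-th roots, and separate the right-hand side with the elementary inequality $(A+B)^{1/b}\le 2^{1/b}(A^{1/b}+B^{1/b})$, which holds for every $b>0$ via $A+B\le 2\max\{A,B\}$. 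This yields the two advertised terms: the decaying transient with $\beta=(1-\lambda)^{1/b}<1$ and $\vartheta=(2/\alpha)^{1/b}$, and a $\K$-function $\gamma_{\rm ext}$ of the sup-norm of the input.

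The geometric summation and the root-splitting are routine. The one genuinely delicate point I expect is the causal input construction: because \eqref{sim3} furnishes $u$ only pointwise in $(s',s,x,\hat x,\hat u)$, I must argue that these instantaneous selections assemble into an admissible function $\mathbf u\in\UC$ whose induced state sequence is \emph{exactly} the one entering $\mathcal V_k$, and I must check that all the norms and summations involved remain well defined in the infinite-dimensional $\ell^p$ setting so that the estimate is meaningful on the Banach spaces $\X,\hat\X$.
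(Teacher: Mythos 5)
Your proposal is correct, and it reaches the conclusion by a genuinely different (and in fact sharper) route than the paper. The paper follows the Jiang--Wang ISS-Lyapunov argument: it fixes an auxiliary $\varepsilon\in(0,\lambda)$, splits the state space into the sublevel set $\mathcal D=\{V_s\le\tfrac{1}{\lambda-\varepsilon}\rho_{\rm ext}(|\hat u|_q)\}$ and its complement, proves geometric decay with rate $1-\varepsilon$ outside $\mathcal D$ and forward boundedness inside it, and then merges the two cases. You instead iterate the affine recursion $\mathcal V_{k+1}\le(1-\lambda)\mathcal V_k+\rho_{\rm ext}(|\mathbf{\hat u}(k)|_q)$ directly and sum the geometric series, which is more elementary, dispenses with the free parameter $\varepsilon$, and yields a uniformly tighter estimate: your decay rate $\beta=(1-\lambda)^{1/b}$ is smaller than the paper's $(1-\varepsilon)^{1/b}$, and your gain coefficient $\tfrac{1}{\lambda}$ is the $\varepsilon\to 0$ limit of the paper's $\tfrac{1-\varepsilon}{\lambda-\varepsilon}$ (to which the paper cannot actually pass, since its decay rate degenerates as $\varepsilon\to 0$). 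The root-splitting via $(A+B)^{1/b}\le 2^{1/b}(A^{1/b}+B^{1/b})$ matches the paper's use of the same factor of $2$ in $\vartheta$ and $\gamma_{\rm ext}$. Your explicit causal construction of $\mathbf u$ by pointwise selection from \eqref{sim3} is also sound; the paper leaves this implicit in the proof and addresses it only afterwards in Remark~\ref{input} via an interface function, so flagging it as the delicate step is appropriate rather than a weakness.
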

	\begin{remark}\label{input}
		Suppose that we are given an interface function $\nu$, which maps
		every $x, \hat{x}$, $\hat{u}$, and $s$ to an input $u = \nu(x, \hat{x},\hat{ u},s)$ so that \eqref{sim3} is
		satisfied. Then, the input $\mathbf{u}$ that realizes \eqref{ms} is readily given
		by $\mathbf{u}(k)=\nu(\mathbf x(k), \hat{\mathbf x}(k),\hat{ \mathbf u}(k),\sigma(k))$; see \cite[Theorem 1]{Runggerhscc}.
	\end{remark}
	Due to the size of the systems, a simulation function from $\hat\Sigma$ to $\Sigma$ is quite hard to be \emph{directly} computed. To address this complexity, we follows a compositional approach and define local simulation functions for each finite-dimensional subsystem (cf. Definition \ref{SIM_vi_existence}). This enables us to verify \eqref{sim2} and \eqref{sim3} in a bottom-up way. The next section develops this strategy with the use of small-gain theory for infinite networks.
	\section{Compositional Construction of Abstractions and Simulation Functions}\label{sec:SIM-cons}
	In the following, we provide a method for compositional construction of simulation functions between the infinite networks $\Sigma$ and $\hat\Sigma$. 
	We assume that each subsystem $\Sigma_i=(\mathbb{X}_i, \mathbb{W}_i, \mathbb{U}_i, \mathcal{U}_i, \mathbb{Y}_i, h_{i,s_i}, f_{i,s_i}, S_i)$ and $\hat \Sigma_i=(\mathbb{\hat X}_i, \mathbb{\hat W}_i, \mathbb{\hat U}_i, \mathcal{\hat U}_i, \mathbb{\hat Y}_i, \hat h_{i,s_i}, \hat f_{i,s_i}, S_i)$ admits a local simulation function as defined below. 
	\begin{definition}\label{SIM_vi_existence}
		Consider subsystems $\Sigma_i=(\mathbb{X}_i, \mathbb{W}_i, \mathbb{U}_i, \mathcal{U}_i, \mathbb{Y}_i, h_{i,s_i}, f_{i,s_i}, S_i)$ and $\hat \Sigma_i=(\mathbb{\hat X}_i, \mathbb{\hat W}_i, \mathbb{\hat U}_i, \mathcal{\hat U}_i, \mathbb{\hat Y}_i, \hat h_{i,s_i}, \hat f_{i,s_i}, S_i)$, $i \in \N$.
		Let $p,q \in [1,\infty)$ be given.
		Assume that there exist functions $V_{i,s_i}:\mathbb{X}_i\times \mathbb{\hat X}_i \rightarrow \R_+, s_i\in S_i$, satisfying the following properties
		\begin{itemize}
			\item There are  constants $\alpha_i >0$ so that for all $x_i \in \mathbb{X}_i$ and all $\hat x_i \in \mathbb{\hat X}_i$
			\begin{equation}\label{sim_eq_viest}
			\alpha_i \left| {h_{i, s_i}(x_i) - \hat h_{i, s_i}(\hat x_i)} \right|^p \le V_{i,s_i}(x_i,\hat x_i).%
			\end{equation}
			\item  There are positive constants $\lambda_i<1,\rho_{i,\rm int}, \rho_{\rm{i,ext}}$ such that for all $ s'_i, s_i\in S_i$, all $x_i \in \mathbb{X}_i$, all $\hat x_i \in \mathbb{\hat X}_i$, all $\hat u_i\in\mathbb{\hat U}_i$, there exists $u_i\in\mathbb{U}_i$ so that the following holds for all $ w_i \in \mathbb{W}_i$ and all $\hat w_i \in \mathbb{\hat W}_i$ 
			\begin{align}\label{sim_eq_nablaviest}	
				\begin{array}{l}
					V_{i, s'_i}\left(f_{i,s_i}(x_i, w_i, u_i),\hat f_{i,s_i}(\hat x_i, \hat w_i, \hat u_i)\right)\\- V_{i, s_i}\left(x_i,\hat x_i\right)\le
					- \lambda_i V_{i, s_i}(x_i,\hat x_i)\\ + {\rho _{\rm{i,ext}}}|\hat u_i|^{q} + {\rho _{{\mathop{i,\rm int}} }}\left| { w_i - {\hat w_i}} \right|^p.
				\end{array}%
			\end{align}	
		\end{itemize}
	\end{definition}
	Then functions $V_{i,s_i}$ are called local simulation functions from $\hat \Sigma_i$ to $\Sigma_i$ and $\hat\Sigma_i$ are called abstractions of $\Sigma_i$ for each $i\in\N$.
	
	Assume that each $\Sigma_i$ admits an abstraction $\hat\Sigma_i$, $\forall i\in\N$, given as in Definition \ref{SIM_vi_existence}. We establish a compositional approach for the construction of continuous abstractions of infinite networks~\eqref{eq_interconnection} by aggregating individual continuous abstractions $\hat\Sigma_i$.
	To do so, we need interaction between subsystems to be sufficiently weak, which is quantitatively described by a small-gain condition, see Assumption~\ref{ass:spectral-radius} below. 
	
	To employ the small-gain theorem, the following conditions are required.
	The first one makes uniformity conditions on the constants given by Definition~\ref{SIM_vi_existence}.%
	
	\begin{assumption}\label{ass_external_gains}
		There are constants $\underline{\alpha}, \underline{\lambda}, {{\overline\rho _{\rm{ext}}}} > 0$ so that for all $i\in \N$, we have $\ul\alpha \leq \alpha_i , \ul\lambda \leq \lambda_i, {\rho _{\rm{i,ext}}} \leq {{\overline\rho _{\rm{ext}}}}$.
	\end{assumption}
	We collect the coefficients from \eqref{sim_eq_viest} and \eqref{sim_eq_nablaviest} to define 
	\begin{align}\label{gamma}
		&{\gamma _{ij}}: = \left\{ {\begin{array}{*{20}{c}}
				{{\rho _{i,{\rm{int}}}}{{\bar N}_i}\frac{1}{{{\alpha _j}}},}& j \in I_{i,s_i}^{\mathrm{in}} ,\\
				{0,}& j \notin I_{i,s_i}^{\mathrm{in}} ,
			\end{array}} \right.
		\end{align}
		where ${\bar N}_i$ denotes the cardinality of the set $I_{i,s_i}^{\mathrm{in}}$.
		\newline
		We additionally introduce the following matrices. 
		\begin{align}\label{condd1}
			&\Lambda := \diag(\lambda_1,\lambda_2,\lambda_3,\ldots), \;\;\Gamma := (\gamma_{ij})_{i,j\in\N}.
		\end{align}
		Now, we define the following matrix by which we express our \emph{small-gain} condition
		\begin{align}
			\Psi := \Lambda^{-1} \Gamma := (\psi_{ij})_{i,j\in\N} , \,\,\, \psi_{ij} = \gamma_{ij}/\lambda_i .
		\end{align}
		We also make an assumption on the boundedness of the operator $\Gamma$.
		\begin{assumption}\label{SIM_def}
			The operator $\Gamma = (\gamma_{ij})_{i,j\in\N}$ satisfies
			$\sup_{j \in \N} \sum_{i=1}^{\infty} \gamma_{ij} < \infty$.
		\end{assumption}	
		Note that Assumption \ref{SIM_def} always holds if each subsystem is interconnected to finitely many subsystems and \emph{no} global communication is used. 
		
		The following spectral radius condition provides a \emph{quantitative} bound on the strength of couplings between the subsystems.
		This is, in fact, the small-gain condition that is required to guarantee that the aggregation of $\hat\Sigma_i$ gives a continuous abstraction for network $\Sigma$.
		\begin{assumption}\label{ass:spectral-radius}
			The spectral radius $r(\Psi) < 1$.
		\end{assumption}
		The following theorem gives the \emph{main result} of the paper, which is a compositional approach to construct the abstractions of infinite interconnected switched control systems and their corresponding simulation functions.
		\begin{theorem}\label{MTC}
			Consider infinite networks $\Sigma=(\mathbb{X}, \mathbb{U}, \mathcal{U}, \mathbb{Y}, h_s, f_s, S)$ and $\hat \Sigma=(\mathbb{\hat X}, \mathbb{\hat U}, \mathcal{\hat U}, \mathbb{\hat Y}, \hat h_s, \hat f_s, S)$. Let $p,q \in [1,\infty)$ be given. Let local simulation functions $V_{i,s_i}:\mathbb{X}_i\times \mathbb{\hat X}_i \rightarrow \R_+, s_i\in S_i$, satisfy Assumptions,~\ref{ass_external_gains},~\ref{SIM_def} and \ref{ass:spectral-radius}.
			Then there exists a vector $\mu = (\mu_{i})_{i\in\N}\in \ell^{\infty}$ satisfying $\underline{\mu} \leq \mu_i \leq \overline{\mu}$ with constants $\underline{\mu},\overline{\mu}>0$ such that the following is satisfied
			\begin{equation}\label{eq_muest2}
			\frac{[\mu\trn(-\Lambda + \Gamma)]_i}{\mu_{i}} \leq -\lambda_{\infty}, \quad \forall i \in \N,
			\end{equation}
			for a constant $\lambda_{\infty} \in(0,1)$.
			Moreover, the following family of functions $V_s:\X\times \mathbb{\hat X} \rightarrow \R_+, s\in S$, with $S=\prod_{i\in\N}S_i$,
			\begin{equation*}\label{eq:sim-function-construction}
				V_s(x, \hat x) = \sum_{i=1}^{\infty} \mu_{i} V_{i,s_i}(x_i, \hat x_i),\quad V_s:X\times \hat X \rightarrow \R_+,  %
			\end{equation*}
			are simulation functions from $\hat \Sigma$ to $\Sigma$ with $b = p , \alpha = \ul\mu	\ul\alpha$ as in~\eqref{sim2} and $\lambda = \lambda_\infty$ and $\rho_{\mathrm{ext}} : t \mapsto \overline{\mu}\; \ol\rho _{\rm{ext}} t^q$ as in~\eqref{sim3}.	
		\end{theorem}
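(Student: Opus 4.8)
The plan is to establish the two claims in sequence: first the existence of the scaling weights $\mu$ furnished by the small-gain condition, and then the verification that the weighted sum $V_s$ fulfils the two defining inequalities \eqref{sim2} and \eqref{sim3} of a simulation function.

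For the existence of $\mu$, the only hypothesis carrying real content is $r(\Psi)<1$ (Assumption~\ref{ass:spectral-radius}), $\Psi=\Lambda^{-1}\Gamma$ being a nonnegative operator. Expanding \eqref{eq_muest2} componentwise gives $[\mu\trn(-\Lambda+\Gamma)]_i=-\mu_i\lambda_i+\sum_j\mu_j\gamma_{ji}$, so the claim is equivalent to finding a uniformly positive and bounded weight vector with $\sum_j\mu_j\gamma_{ji}\le(\lambda_i-\lambda_\infty)\mu_i$ for every $i$. Writing $\nu\trn=\mu\trn\Lambda$ recasts this, up to routine bookkeeping with the uniform lower bound $\ul\lambda$ on the $\lambda_i$, as a strict left-subeigenvector inequality $\nu\trn\Psi\le(1-\delta)\nu\trn$ for some uniform margin $\delta>0$, which is exactly the object produced by the positive-operator (Perron--Frobenius type) small-gain lemma for infinite networks in~\cite{kawan2019lyapunov}: the condition $r(\Psi)<1$, the uniform bounds of Assumption~\ref{ass_external_gains}, and the column-sum bound of Assumption~\ref{SIM_def} together yield $\nu$, hence $\mu$, in $\linf$ with $\ul\mu\le\mu_i\le\ol\mu$. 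I expect this step to be the main obstacle of the proof: in infinite dimensions one must guarantee that $\mu$ is bounded \emph{both} away from zero and from above uniformly in $i$, and it is precisely Assumptions~\ref{ass_external_gains}--\ref{SIM_def} that make the infinite-dimensional Perron--Frobenius machinery applicable.

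Granting $\mu$, inequality \eqref{sim2} is the easy direction. Since the external output $h_{ii,s_i}$ is a sub-block of the full output $h_{i,s_i}$, one has $|h_{ii,s_i}(x_i)-\hat h_{ii,s_i}(\hat x_i)|^p\le|h_{i,s_i}(x_i)-\hat h_{i,s_i}(\hat x_i)|^p$, so \eqref{sim_eq_viest} together with $\ul\mu\,\ul\alpha\le\mu_i\alpha_i$ yields
\[
\ul\mu\,\ul\alpha\,|h_s(x)-\hat h_s(\hat x)|_p^p=\ul\mu\,\ul\alpha\sum_i|h_{ii,s_i}(x_i)-\hat h_{ii,s_i}(\hat x_i)|^p\le\sum_i\mu_i V_{i,s_i}(x_i,\hat x_i)=V_s(x,\hat x),
\]
which is \eqref{sim2} with $b=p$ and $\alpha=\ul\mu\,\ul\alpha$.

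For the dissipation inequality \eqref{sim3}, I would multiply the local estimate \eqref{sim_eq_nablaviest} by $\mu_i$ and sum over $i$. The delicate term is the internal one, $\sum_i\mu_i\rho_{i,\mathrm{int}}|w_i-\hat w_i|^p$: using the interconnection constraints $w_{ij}=y_{ji}$, $\hat w_{ij}=\hat y_{ji}$ and the fact that $h_{ji,s_j}$ is a sub-block of $h_{j,s_j}$, each block error is bounded through \eqref{sim_eq_viest} by $V_{j,s_j}(x_j,\hat x_j)/\alpha_j$; absorbing the cardinality factor $\bar N_i$ into the definition \eqref{gamma} of $\gamma_{ij}$ gives $\rho_{i,\mathrm{int}}|w_i-\hat w_i|^p\le\sum_j\gamma_{ij}V_{j,s_j}(x_j,\hat x_j)$. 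Interchanging the order of the resulting double series (Tonelli applies since all terms are nonnegative, and Assumption~\ref{SIM_def} ensures $\mu\trn\Gamma\in\linf$ so the sums are finite) converts the internal contribution into $\sum_i[\mu\trn\Gamma]_i V_{i,s_i}$, so the full increment is bounded by
\[
\sum_i[\mu\trn(-\Lambda+\Gamma)]_i V_{i,s_i}(x_i,\hat x_i)+\sum_i\mu_i\rho_{i,\mathrm{ext}}|\hat u_i|^q.
\]
Applying \eqref{eq_muest2} bounds the first sum by $-\lambda_\infty\sum_i\mu_i V_{i,s_i}=-\lambda_\infty V_s(x,\hat x)$, while $\mu_i\rho_{i,\mathrm{ext}}\le\ol\mu\,\ol\rho_{\mathrm{ext}}$ bounds the second by $\ol\mu\,\ol\rho_{\mathrm{ext}}|\hat u|_q^q=\rho_{\mathrm{ext}}(|\hat u|_q)$. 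This is exactly \eqref{sim3} with $\lambda=\lambda_\infty$ and $\rho_{\mathrm{ext}}:t\mapsto\ol\mu\,\ol\rho_{\mathrm{ext}}t^q$. It remains only to note that $V_s$ is finite on $\X\times\hat\X$, which follows from the $\ell^p$-structure of the state space and the uniform bound $\mu_i\le\ol\mu$.
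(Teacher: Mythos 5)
Your proposal is correct and follows essentially the same route as the paper: the existence of the weight vector $\mu$ is obtained from the infinite-dimensional small-gain/Perron--Frobenius lemma of~\cite{kawan2019lyapunov} (the paper cites its Lemma V.10 directly), and both \eqref{sim2} and \eqref{sim3} are verified exactly as you describe, by weighting the local estimates with $\mu_i$, passing the internal-input term through the interconnection constraints and \eqref{sim_eq_viest} to produce $\sum_j \gamma_{ij}V_{j,s_j}$, and then invoking \eqref{eq_muest2}. Your explicit appeal to Tonelli for the series interchange is a detail the paper leaves implicit, but it does not change the argument.
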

		\begin{proof}
			From~\cite[Lemma V.10]{kawan2019lyapunov}, Assumption~\ref{ass:spectral-radius} (i.e. $r(\Psi) < 1$) implies that there exists a vector $\mu = (\mu_{i})_{i\in\N}\in \ell^{\infty}$ satisfying $\underline{\mu} \leq \mu_i \leq \overline{\mu}$ such that~\eqref{eq_muest2} holds.
			
			Now we show that $V$ in~\eqref{eq:sim-function-construction} satisfies~\eqref{sim2} with $\alpha = \ul\mu	\ul\alpha$.
			For any $s\in S$, $s_i\in S_i$, $x\in X$ and $\hat x\in \hat X$ and taking $b=p$, it follows from~\eqref{sim_eq_viest} and Assumption~\ref{ass_external_gains} that
			\begin{align*}
				\sum_{i=1}^{\infty} \mu_i V_{i, s_i}(x_i, \hat x_i) &\geq \sum_{i=1}^{\infty} \mu_i \alpha_i|h_{i, s_{i}}(x_i) - \hat h_{i, s_{i}}(\hat x_i)|^p\\
				& \geq \ul\mu	\ul\alpha \sum_{i=1}^{\infty} |h_{ii, s_{i}}(x_i) - \hat h_{ii, s_{i}}(\hat x_i)|^p \\
				& \geq \ul\mu	\ul\alpha |h_s(x) - \hat h_s(\hat x)|_p^p  .
			\end{align*}
			Next we show the inequality \eqref{sim3} holds as well. Considering \eqref{sim_eq_nablaviest} and \eqref{eq:sim-function-construction}, we obtain the chain of inequality in \eqref{eq_nablaviest} for all $s'_i, s_i\in S_i$, $s_j\in S_j$, $s', s\in S$, $i\in \N$.
			\begin{figure*}[ht]
				\rule{\textwidth}{0.4pt}
				\begin{align}\label{eq_nablaviest}
					\begin{array}{l}
						V_{s'}\left(f_{s}(x, u),\hat f_{s}(\hat x, \hat u)\right)- V_{s}(x,\hat x) = \sum\limits_{i = 1}^\infty  \mu_{i}\left[{V_{i, s'_i}}\left(f_{i, s_i}(x_i, w_i, u_i),\hat f_{i, s_i}(\hat x_i, \hat w_i, \hat u_i)\right)- {V_{i, s_i}}(x_i,\hat x_i) \right]
						\\\le \sum\limits_{i = 1}^\infty  {\mu_{i}}( - {\lambda_i}{V_{i, s_i}}(x_i, \hat x_i) + {\rho _{i,{\mathop{\rm int}} }}\left| {{w_i} - {\hat w_i}} \right|^p + {\rho _{i,\rm ext}}|\hat u_i|^{q})
						\\\le \sum\limits_{i = 1}^\infty  {\mu _i}( - {\lambda_i}{V_{i, s_i}}(x_i, \hat x_i) + \sum\limits_{j \in {I_{i,s_i}^{\mathrm{in}}}} {{\rho _{i,{\mathop{\rm int}} }}{\bar N_i}\left| {{w_{ij}} - {\hat w_{ij}}} \right|^p}  + {\rho _{i,\rm ext}}|\hat u_i|^{q}) \\
						\le \sum\limits_{i = 1}^\infty  {\mu_{i}}( - {\lambda_i}{V_{i, s_i}}(x_i, \hat x_i) + \sum\limits_{j \in {I_{i,s_i}^{\mathrm{in}}}} {{\rho _{i,{\mathop{\rm int}} }}{\bar N_i}\left| {{h_{j, s_j}}({x_j}) - {{\hat h}_{j, s_j}}({{\hat x}_j})} \right|^p} + {\rho _{i,\rm ext}}|\hat u_i|^{q}) \\
						\le \sum\limits_{i = 1}^\infty  {\mu_{i}}( - {\lambda_i}{V_{i, s_i}}({x_i},{{\hat x}_i}) + \sum\limits_{j \in {I_{i,s_i}^{\mathrm{in}}}} {{\rho _{i,{\mathop{\rm int}} }}{\bar N_i}\frac{1}{\alpha _j}{V_{j, s_j}}({x_j},{{\hat x}_j})} + {\rho _{i,\rm ext}}|\hat u_i|^{q}) \\
						\mathop  \leq \limits^{(\ref{gamma})} \sum\limits_{i = 1}^\infty  {\mu_{i}}\left( - {\lambda_i}{V_{i, s_i}}({x_i},{{\hat x}_i}) + \sum\limits_{j \in {I_{i,s_i}^{\mathrm{in}}}} {{\gamma_{ij}}{V_{j, s_j}}({x_j},{{\hat x}_j}})  + {\rho _{i,\rm ext}}|\hat u_i|^{q}\right).
					\end{array}
				\end{align}
				\rule{\textwidth}{0.4pt}
			\end{figure*}
			
			Letting $V_{s_{vec}}(x,\hat x) := \left(V_{i,s_i}(x_i, \hat x_i)\right)_{i\in\N}$ and using~\eqref{eq_nablaviest} and~\eqref{eq_muest2}, we have that	
			\begin{align*}
				&V_{s'}(f_{s}(x, u),\hat f_{s}(\hat x, \hat u))- V_{s}(x,\hat x) \nonumber\\&\leq  \Bigl[ \mu\trn(-\Lambda + \Gamma)V_{s_{vec}}(x,\hat x) + \sum_{i=1}^{\infty} {\mu_{i}}{\rho _{i,\rm ext}}|\hat u_i|^{q}\Bigr]\nonumber\\&
				\leq -\lambda_{\infty} V_{s}(x,\hat x) + {\rho _{\rm ext}}(|\hat u|_q), 
			\end{align*}
			where ${\rho _{\rm ext}}(t)= \overline{\mu}\;{{\overline\rho _{\rm{ext}}}}t^q$ for all $t\geq 0$.
		\end{proof}
		\begin{remark}
			The significance of Assumptions~\ref{ass_external_gains} and~\ref{SIM_def} in Theorem~\ref{MTC} has been discussed in~\cite{kawan2019lyapunov}.
			Specifically, the small-gain condition $r(\Psi) < 1$ is tight and cannot be relaxed.
			In view of Gelfand's formula, the spectral small-gain condition is equivalent to the existence of $k\in\N$ such that $\|\Psi^k\|<1$.
For networks with some special structure, e.g. (quasi) spatially invariant systems, one can easily check Assumption~\ref{ass:spectral-radius} with the use of Gelfand's formula,  see Section~\ref{sec:Example} for more details.
		\end{remark}
		\subsection{From infinite to finite networks}
		The main purpose of dealing with infinite networks is to develop scale-free tools for the analysis and design of finite, but arbitrarily large networks.
		In this section we truncate the infinite network $\Sigma$ and keep only the first $n$ subsystems of the network.
		Roughly speaking, we show that if conditions required by Theorem~\ref{MTC} hold, then for any truncation of infinite network $\Sigma$ and accordingly that of $\hat\Sigma$, the same conclusion as in Theorem~\ref{MTC} is obtained for truncated networks.
		
		Consider the first $n\in\N$ subsystems of $\Sigma$ and denote the truncated system by $\Sigma^{<n>}$=$(\X^{<n>}, \X^{l}, \U^{<n>}, \UC^{<n>}, \Y^{<n>}, h^{<n>}_{s^{<n>}}, f^{<n>}_{s^{<n>}}, S^{<n>})$ whose dynamics is described by
		\begin{align}
			\Sigma^{<n>}\!\!:\!\!\!\!\!\!\quad \left\{\begin{array}{l}
				\!\!\!\!\mathbf x^{<n>}(k+1)=\!\!{f^{<n>}_{\sigma^{<n>}(k)}}({\mathbf x^{<n>}(k)},\tilde{\mathbf{x}}(k),{\mathbf u^{<n>}(k)}),\\
				\!\!\!\!\mathbf y^{<n>}(k)= h^{<n>}_{\sigma^{<n>}(k)}(\mathbf x^{<n>}(k)),
			\end{array} \right. 
		\end{align}
		where $\mathbf x^{<n>}(k) \!\!=\!\! (\mathbf x_i(k))_{1\leq i\leq n}$ are elements of $\X^{<n>}\subseteq\R^{N}$, $N:=\sum_{i=1}^{n}n_i$, $\mathbf u^{<n>}(k) \!\!=\!\! (\mathbf u_i(k))_{1\leq i\leq n}$ are elements of $\U^{<n>}\subseteq\R^{M}$ and $M:=\sum_{i=1}^{n}m_i$.
		Moreover, we denote by $I_{\sigma^{<n>}(k)}^{in^{<n>}}=\bigcup\limits_{i = 1}^n {I_{i,{\sigma _i}(k)}^{in}\backslash \left\{ {1, \ldots ,n} \right\}}$, the finite set of neighbors of the first $n$ subsystems.
		Then, $\tilde{\mathbf{x}}(k)=(\mathbf x_j(k))_{j\in I_{\sigma^{<n>}(k)}^{in^{<n>}}}\in\X^{l}\subseteq\R^L$, $L:=\sum_{j\in I_{\sigma^{<n>}(k)}^{in^{<n>}}}n_j$, is considered as the additional input vector.
		Note that we do not neglect subsystems $\Sigma_i$, $i>n$, instead we consider them as additional external inputs $\tilde{\mathbf{x}}(k)$ to the network $\Sigma^{<n>}$.
		Clearly, the case in which subsystems $\Sigma_i$, $i>n$, are entirely removed from the network is covered by our setting by taking $\tilde x \equiv 0$.
		We denote the set of input functions of the truncated network as $\mathcal{U}^{<n>}$ and the output maps are viewed as $ h^{<n>}_{s^{<n>}}:\mathbb{X}^{<n>}\to\mathbb{Y}^{<n>}$ with $S^{<n>}=\prod_{1\leq i\leq n}S_i$.  Moreover, functions ${f^{<n>}_{s^{<n>}}}:\mathbb{X}^{<n>}\times\mathbb{X}^l\times\mathbb{U}^{<n>}\to\mathbb{X}^{<n>}$ are defined accordingly.

		In the following, we construct the compositional construction of abstractions for the network $\Sigma^{<n>}$ under the assumption of Theorem \ref{MTC}.
		\begin{theorem}\label{MTC2}
			Consider the truncated networks $\Sigma^{<n>}=(\mathbb{X}^{<n>}, \X^{l}, \mathbb{U}^{<n>}, \mathcal{U}^{<n>}, \mathbb{Y}^{<n>}, h^{<n>}_{s^{<n>}}, f^{<n>}_{s^{<n>}}, S^{<n>})$ and $\hat \Sigma^{<n>}=(\mathbb{\hat X}^{<n>}, \hat\X^{l}, \mathbb{\hat U}^{<n>}, \mathcal{\hat U}^{<n>}, \mathbb{\hat Y}^{<n>}, \hat h^{<n>}_{s^{<n>}},$ $\hat f^{<n>}_{s^{<n>}}, S^{<n>})$.
			Let $p,q \in [1,\infty)$ be given.
			Consider local simulation functions $V_{i,s_i}:\X_i\times \mathbb{\hat X}_i \rightarrow \R_+, s_i\in S_i$, and suppose that Assumptions~\ref{ass_external_gains},~\ref{SIM_def} and \ref{ass:spectral-radius} hold.
Assume that there exists a vector $\mu = (\mu_{i})_{i\in\N}\in \ell^{\infty}$, $\underline{\mu} \leq \mu_i \leq \overline{\mu}$, with some constants $\ul\mu,\ol\mu>0$ satisfying~\eqref{eq_muest2}.
			Then, the family of functions $V_{s^{<n>}}:\mathbb{X}^{<n>}\times \mathbb{\hat X}^{<n>} \!\!\!\!\rightarrow \R_+$, $s^{<n>}\in S^{<n>}$, where
			\begin{equation*}\label{eq:sim-function-construction2}
				V_{s^{<n>}}(x^{<n>}, \hat x^{<n>})\! = \!\!\sum_{i=1}^{n} \mu_{i} V_{i,s_i}(x_i, \hat x_i), %
			\end{equation*}
			are simulation functions from $\hat \Sigma^{<n>}$ to $\Sigma^{<n>}$ with $b = p , \alpha = \ul\mu	\ul\alpha$ as in~\eqref{sim2} and satisfy the following 
			\begin{align}\label{sim_trunc}
					\begin{array}{*{20}{l}}
						V_{s^{<n>'}}(f_{s^{<n>}}^{<n>}(x^{<n>},\tilde x, u^{<n>}),\hat f_{s^{<n>}}^{<n>}(\hat x^{<n>},\hat{\tilde x}, \hat u^{<n>}))\\-V_{s^{<n>}}(x^{<n>},\hat x^{<n>})\\\leq -\lambda V_{s^{<n>}}(x^{<n>},\hat x^{<n>})+{\rho _{\rm ext}}(|\hat u^{<n>}|_q)+{\rho _{\rm ext}}(|\hat{\tilde x}^{<n>}|_q),
					\end{array}
				\end{align}
				for all $s^{{<n>}'}, s^{<n>}\in S^{<n>}$, where $\lambda = \lambda_\infty$ and $\rho_{\mathrm{ext}} : t \mapsto \overline{\mu}\; \ol\rho _{\rm{ext}}t^q$.
			\end{theorem}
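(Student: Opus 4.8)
The plan is to replay the argument behind Theorem~\ref{MTC} almost verbatim, the only genuine change being that every series $\sum_{i=1}^{\infty}$ is replaced by the finite sum $\sum_{i=1}^{n}$, and that the in-neighbours of each retained subsystem are split according to whether they survive the truncation. Since the hypotheses of Theorem~\ref{MTC} are in force, the weight vector $\mu=(\mu_i)_{i\in\N}$ with $\ul\mu\le\mu_i\le\ol\mu$ and~\eqref{eq_muest2} is already provided by the theorem statement, and I would reuse \emph{exactly the same} weights to form $V_{s^{<n>}}=\sum_{i=1}^{n}\mu_i V_{i,s_i}$; no new spectral computation is required. For the output estimate~\eqref{sim2} I would merely restrict the chain used in Theorem~\ref{MTC} to the first $n$ indices: from~\eqref{sim_eq_viest} and Assumption~\ref{ass_external_gains},
\begin{align*}
\sum_{i=1}^{n}\mu_i V_{i,s_i}(x_i,\hat x_i)
&\ge \ul\mu\,\ul\alpha\sum_{i=1}^{n}\bigl|h_{ii,s_i}(x_i)-\hat h_{ii,s_i}(\hat x_i)\bigr|^p\\
&\ge \ul\mu\,\ul\alpha\,\bigl|h^{<n>}_{s^{<n>}}(x)-\hat h^{<n>}_{s^{<n>}}(\hat x)\bigr|_p^p,
\end{align*}
which is~\eqref{sim2} with $b=p$ and $\alpha=\ul\mu\,\ul\alpha$. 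This part is routine.

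For the dissipation inequality~\eqref{sim_trunc} I would telescope $V_{s^{<n>'}}(\cdots)-V_{s^{<n>}}(\cdots)=\sum_{i=1}^{n}\mu_i\bigl[V_{i,s'_i}(\cdots)-V_{i,s_i}(\cdots)\bigr]$, apply the local estimate~\eqref{sim_eq_nablaviest} to each summand, and then convert the internal-input mismatches into output mismatches through the interconnection constraints $w_{ij}=y_{ji}$ and finally into $V_{j,s_j}/\alpha_j$ by~\eqref{sim_eq_viest}, exactly reproducing the chain~\eqref{eq_nablaviest} and the coefficients $\gamma_{ij}$ of~\eqref{gamma}. The new ingredient is that the inner sum over $j\in I^{\mathrm{in}}_{i,s_i}$ must be partitioned into the \emph{retained} neighbours $j\le n$ and the \emph{truncated} ones $j>n$ (the latter being precisely the indices gathered in $I_{\sigma^{<n>}(k)}^{in^{<n>}}$, whose states form the additional input $\tilde x$). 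The retained part yields $\sum_{i=1}^{n}\mu_i\sum_{j\le n}\gamma_{ij}V_{j,s_j}$, while the truncated part is rerouted into the additional-input channel.

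The decisive step is the retained part, which is where the scale-free phenomenon lives. Swapping the order of summation gives $\sum_{j=1}^{n}\bigl(\sum_{i=1}^{n}\mu_i\gamma_{ij}\bigr)V_{j,s_j}$, and since every $\gamma_{ij}\ge0$, $\mu_i>0$ and $V_{j,s_j}\ge0$, I can only \emph{enlarge} each coefficient by completing its column to the full series, $\sum_{i=1}^{n}\mu_i\gamma_{ij}\le\sum_{i=1}^{\infty}\mu_i\gamma_{ij}$ (this limit is finite by Assumptions~\ref{ass_external_gains} and~\ref{SIM_def}). Adding back the diagonal terms $-\mu_j\lambda_j$ reconstructs $[\mu\trn(-\Lambda+\Gamma)]_j$, so~\eqref{eq_muest2} applies unchanged and delivers
\[
\sum_{i=1}^{n}\mu_i\Bigl(-\lambda_i V_{i,s_i}+\!\!\sum_{j\le n}\gamma_{ij}V_{j,s_j}\Bigr)\le -\lambda_\infty\sum_{j=1}^{n}\mu_j V_{j,s_j}=-\lambda_\infty V_{s^{<n>}},
\]
i.e.\ the \emph{same} decay rate $\lambda_\infty$ as for the infinite network, independently of $n$. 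The external-input terms $\sum_{i}\mu_i\rho_{i,\mathrm{ext}}|\hat u_i|^q$ aggregate to $\rho_{\mathrm{ext}}(|\hat u^{<n>}|_q)=\ol\mu\,\ol\rho_{\mathrm{ext}}|\hat u^{<n>}|_q^q$ as in Theorem~\ref{MTC}, and the truncated-neighbour contribution, obtained through $w_{ij}=y_{ji}$ and the uniform bounds $\ol\mu,\ol\rho_{\mathrm{ext}}$, is dominated by the $\K$-function $\rho_{\mathrm{ext}}(|\hat{\tilde{x}}^{<n>}|_q)$; collecting the three pieces gives~\eqref{sim_trunc}.

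I expect the main obstacle to be bookkeeping rather than conceptual. The crucial sign check is that dropping the tail $i>n$ of every column sum genuinely preserves the inequality direction — it does, by nonnegativity of $\gamma_{ij}V_{j,s_j}$ — so that $\lambda_\infty$ passes through every truncation untouched. The second delicate point is making the additional-input channel precise: one must identify exactly which neighbour couplings are rerouted into $\tilde x$ and verify that they are dominated \emph{uniformly in $n$} by a single $\K$-function of $|\hat{\tilde{x}}^{<n>}|_q$, which is where the modelling of subsystems $i>n$ as external disturbances and the uniform constants of Assumption~\ref{ass_external_gains} are essential. Once these two accountings are settled, the remainder is the same computation as~\eqref{eq_nablaviest}.
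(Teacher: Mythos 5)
Your proposal is correct and follows essentially the same route as the paper: restrict the lower-bound chain to the first $n$ indices, telescope the weighted sum, reuse the chain~\eqref{eq_nablaviest} with the truncated neighbours $j>n$ rerouted into the additional-input channel $\hat{\tilde x}$, and invoke~\eqref{eq_muest2} with the \emph{same} weights $\mu$ to retain the decay rate $\lambda_\infty$. The only difference is that you spell out the column-completion step $\sum_{i=1}^{n}\mu_i\gamma_{ij}\le\sum_{i=1}^{\infty}\mu_i\gamma_{ij}$ (valid by nonnegativity), which the paper leaves implicit when it applies $[\mu\trn(-\Lambda+\Gamma)]_{1\leq i\leq n}$ directly.
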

			\begin{proof}
				By following similar arguments as in Theorem \ref{MTC}, one can obtain
				\begin{align*}
					\sum_{i=1}^{n} \mu_i V_{i, s_i}(x_i, \hat x_i) &\geq \sum_{i=1}^{n} \mu_i \alpha_i|h_{i, s_{i}}(x_i) - \hat h_{i, s_{i}}(\hat x_i)|^p\\
					& \geq \ul\mu	\ul\alpha \sum_{i=1}^{n} |h_{ii, s_{i}}(x_i) - \hat h_{ii, s_{i}}(\hat x_i)|^p \\
					& \geq \ul\mu	\ul\alpha |h^{<n>}_{s^{<n>}}(x^{<n>}) - \hat h^{<n>}_{s^{<n>}}(\hat x^{<n>})|_p^p  .
				\end{align*}
				Moreover, by letting $V_{s^{<n>}_{vec}}(x^{<n>},\hat x^{<n>}) := \left(V_{i,s_i}(x_i, \hat x_i)\right)_{1\leq i\leq n}$, using the chain of inequalities in \eqref{eq_nablaviest} for all $s'_i, s_i\in S_i$, $s_j\in S_j$, $s^{{<n>}'}, s^{<n>}\in S^{<n>}$, $1\leq i\leq n$, and~\eqref{eq_muest2} , we have	
				\begin{align*}
					&V_{s^{{<n>}'}}(f^{<n>}_{{s^{<n>}}}(x^{<n>},\tilde x, u^{<n>}),\hat f^{<n>}_{s^{<n>}}(\hat x^{<n>}, \hat {\tilde x}, \hat u^{<n>}))\nonumber\\&- V_{{s^{<n>}}}(x^{<n>},\hat x^{<n>})\leq \nonumber\\&  \Bigl\{\Bigl[\mu\trn(-\Lambda + \Gamma)\Bigr]_{1\leq i\leq n} V_{s^{<n>}_{vec}}(x,\hat x) + \sum_{i=1}^{n} {\mu_{i}}{\rho _{i,\rm ext}}|\hat{\tilde x}_i|^{q}\nonumber\\&+\sum_{i=1}^{n} {\mu_{i}}{\rho _{i,\rm ext}}|\hat u_i|^{q}\Bigr\}\nonumber\\&
					\leq -\lambda_{\infty} V_{s^{<n>}}(x^{<n>},\hat x^{<n>}) + \overline{\mu}\;{{\overline\rho _{\rm{ext}}}}|\hat{\tilde x}^{<n>}|_q^q\\&\;\;\;\;\;+\overline{\mu}\;{{\overline\rho _{\rm{ext}}}}|\hat u^{<n>}|_q^q.
				\end{align*}
			\end{proof}
			As can be seen from Theorem \ref{MTC2}, the decay rate $\lambda_{\infty}$ as well as the gain function due to external input $\hat u$ are preserved under truncation.
				Thus, the indices of the proposed compositional method are \emph{independent} of the network size.
			\section{Construction of Abstractions for Linear Systems}\label{linear}
			In this section, we \emph{explicitly} construct local abstractions and corresponding simulation functions for linear switched subsystems.
			
			We make the following assumption on the simulation functions, which is an incremental version of a similar assumption used to achieve the input-to-state stability of switched systems under constrained switching conditions~\cite{vu2007input}.
			\begin{assumption}\label{cons_switch}
				There exist uniformly bounded constants $\tau_i\geq 1$, $i\in \N$,  such that for all $x_i \in \X_i$, all $\hat x_i \in \mathbb{\hat X}_i$ and every $s_i,s'_i\in S_i$
				\begin{equation*}
					{V_{i, s_i}}({x_i},{{\hat x}_i})\leq \tau_i {V_{i, s'_i}}({x_i},{{\hat x}_i}).
				\end{equation*}
			\end{assumption}	
			Consider the following class of linear switched subsystems
			\begin{equation}\label{eq_ith_subsystem_lin}
			{\Sigma _i}: \left\{ \begin{array}{l}
			\mathbf x_i(k+1)  = A_{i, \sigma_i(k)}{\mathbf x}_i(k)+D_{i, \sigma_i(k)}{\mathbf w_i(k)}\\\;\;\;\;\;\;\;\;\;\;\;\;\;\;\;\;\;\;\;+B_{i, \sigma_i(k)}{\mathbf u}_i(k) ,\\
			{\mathbf y_i(k)} = C_{i, \sigma_i(k)}{\mathbf x_i(k)},
			\end{array} \right.%
			\end{equation}
			where $\sigma_i\in{\cal{S}}_i$, $A_{i, \sigma_i(k)}\in \R^{n_i\times n_i}$, $B_{i, \sigma_i(k)}\in \R^{n_i\times m_i}$, $C_{i, \sigma_i(k)}\in \R^{q_i\times n_i}$ and $D_{i, \sigma_i(k)}\in \R^{n_i\times p_i}$, for $i\in\N$.
			
			Choose $ \X = \ell^2(\N,(n_i))$ and $\mathbb{U} = \ell^2(\N,(m_i))$ for the overall infinite network. 
			By slight abuse of notation, we use the tuple $\Sigma_i=({A_{i, s_i}}, {B_{i, s_i}},$ ${C_{i, s_i}}, {D_{i, s_i}})$ to refer to switched subsystem with transition and output functions of the form \eqref{eq_ith_subsystem_lin} with the specified matrices dimensions.
			
			Assume that there exist a family of matrices $K_{i, s_i}$, positive definite matrices $M_{i, s_i}$, real numbers $\epsilon_i>0$, and $0<\kappa_i<1$ such that the following matrix inequalities hold for all $s_i\in S_i$, $i\in\N$
			\begin{subequations}\label{cond22}
				\begin{align}
					&C_{i, s_i}^\top{C_{i, s_i}} \preceq {M_{i, s_i}},\label{cond222}\\&
					(1+\frac{1}{\epsilon_i}+\epsilon_i){({A_{i, s_i}} + {B_{i, s_i}}{K_{i, s_i}})^\top}{M_{i, s_i}}({A_{i, s_i}} + {B_{i, s_i}}{K_{i, s_i}})\nonumber\\& \preceq  \kappa_i M_{i, s_i}.\label{cond2222}
				\end{align}
			\end{subequations}	
			\begin{remark}
				Given $\kappa_i$ and $\epsilon_i$, inequality \eqref{cond2222} is not jointly convex on the decision variables $M_{i,s_i}$ and $K_{i,s_i}$. Then, this inequality is not amenable to existing semidefinite tools for linear matrix inequalities (LMI). By using the Schur complement lemma, \eqref{cond2222} could be transformed to the following LMI over decision variables $Q_{i,s_i}$ and $Z_{i,s_i}$:
				\begin{align*}
					&\left[ {\begin{array}{*{20}{c}}
							{ - {\kappa _i}{Q_{i,{s_i}}}}&{{Q_{i,{s_i}}}{A}_{i,{s_i}}^\top + {Z}_{i,{s_i}}^\top{B}_{i,{s_i}}^\top}\\
							{{A_{i,{s_i}}}{Q_{i,{s_i}}}}+{B}_{i,{s_i}}{Z}_{i,{s_i}}&{ - (1+\frac{1}{\epsilon_i}+\epsilon_i){Q_{i,{s_i}}}}
						\end{array}} \right] \preceq 0,\\&\;\;{Q_{i,{s_i}}} \succ 0,
					\end{align*}
					where ${Q_{i,{s_i}}}=M_{i,{s_i}}^{-1}$ and ${Z_{i,{s_i}}}={K_{i,{s_i}}}{Q_{i,{s_i}}}$, $i\in\N$.
				\end{remark}
				Consider the simulation function candidates $V_{i,s_i}:\mathbb{R}^{n_i}\times \mathbb{R}^{\hat n_i} \rightarrow \R_+, s_i\in S_i$, $i\in\N$, as
				\begin{align}\label{sim_func}
					{V_{i, s_i}}({x_i},{{\hat x}_i}) = {({x_i} - {P_{i, s_i}}{{\hat x}_i})^\top}{M_{i, s_i}}({x_i} - {P_{i, s_i}}{{\hat x}_i}).
				\end{align}
				The control inputs of the concrete subsystems are given by the interface functions $\nu_i$ as follows.
				\begin{align}\label{interface}
					u_i&=\nu_i (x_i,{\hat x}_i,{\hat u}_i,{\hat w}_i,s_i) \\\notag
					= & K_{i, s_i}({x_i} - P_{i, s_i} {\hat x}_i) + Q_{i, s_i} {\hat x}_i+ R_{i, s_i}{\hat u}_i + T_{i, s_i} {\hat w}_i ,
				\end{align}
				where $P_{i, s_i}$, $i\in\N$, are some matrices of appropriate dimensions.
				Assume that the following inequalities hold for some matrices of appropriate dimensions $Q_{i, s_i}$, $T_{i, s_i}$.
				\begin{subequations}\label{cond2}
					\begin{align}
						&A_{i, s_i} P_{i, s_i} =  P_{i, s_i} {\hat A}_{i, s_i} - B_{i, s_i} Q_{i, s_i} , \label{cond2_1}\\
						&{D_{i, s_i}} = {P_{i, s_i}}{{\hat D}_{i, s_i}} - {B_{i, s_i}}{T_{i, s_i}},\label{cond2_2}\\
						&{C_{i, s_i}}{P_{i, s_i}} = {{\hat C}_{i, s_i}}.\label{cond2_3}
					\end{align}
				\end{subequations}
				Next theorem shows that functions $V_{i, s_i}$, $s_i\in S_i$, defined in \eqref{sim_func}, are simulation functions from $\hat\Sigma_i$ to $\Sigma_i$.
				\begin{theorem}
					Consider systems $\Sigma_i=({A_{i, s_i}}, {B_{i, s_i}},$ ${C_{i, s_i}}, {D_{i, s_i}})$ and $\hat \Sigma_i=({\hat A_{i, s_i}}, {\hat B_{i, s_i}}, {\hat C_{i, s_i}}, {\hat D_{i, s_i}})$ for $i\in\N$. Suppose that for all $s_i\in S_i$, there exist appropriate matrices $M_{i, s_i}$, $P_{i, s_i}$, $K_{i, s_i}$, $Q_{i, s_i}$ and $T_{i, s_i}$ satisfying \eqref{cond22} and \eqref{cond2}. Moreover, assume that $\tau_i\kappa_i<1$. Then, functions in \eqref{sim_func}  are simulation functions from $\hat \Sigma_i$ to $\Sigma_i$ with concrete inputs given by \eqref{interface}.
				\end{theorem}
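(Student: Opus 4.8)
The plan is to verify that the quadratic candidate \eqref{sim_func} satisfies the two requirements of Definition~\ref{SIM_vi_existence} with $p=q=2$ (consistent with the choice $\X=\ell^2(\N,(n_i))$, $\U=\ell^2(\N,(m_i))$), reading off the local constants $\alpha_i,\lambda_i,\rho_{i,\rm int},\rho_{i,\rm ext}$ along the way. Throughout I would abbreviate the error by $\xi_i := x_i - P_{i,s_i}\hat x_i$ and the closed-loop matrix by $\bar A_{i,s_i} := A_{i,s_i}+B_{i,s_i}K_{i,s_i}$.

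First I would dispatch the output inequality \eqref{sim_eq_viest}. Since the abstraction output is $\hat C_{i,s_i}\hat x_i$ and, by \eqref{cond2_3}, $\hat C_{i,s_i}=C_{i,s_i}P_{i,s_i}$, the output mismatch factors as $C_{i,s_i}x_i-\hat C_{i,s_i}\hat x_i = C_{i,s_i}\xi_i$; squaring and invoking \eqref{cond222}, i.e. $C_{i,s_i}^\top C_{i,s_i}\preceq M_{i,s_i}$, gives $|C_{i,s_i}x_i-\hat C_{i,s_i}\hat x_i|^2 = \xi_i^\top C_{i,s_i}^\top C_{i,s_i}\xi_i \le \xi_i^\top M_{i,s_i}\xi_i = V_{i,s_i}(x_i,\hat x_i)$, so \eqref{sim_eq_viest} holds with $\alpha_i=1$.

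The core step is the decrease inequality. Substituting the interface \eqref{interface} into the concrete update $x_i^+ = A_{i,s_i}x_i + D_{i,s_i}w_i+B_{i,s_i}u_i$ and forming $x_i^+ - P_{i,s_i}\hat x_i^+$, I would use the matching conditions \eqref{cond2_1}--\eqref{cond2_2} to cancel the pure $\hat x_i$ terms and the standalone $\hat w_i$ terms, which collapses the propagated error to $x_i^+-P_{i,s_i}\hat x_i^+ = \bar A_{i,s_i}\xi_i + D_{i,s_i}(w_i-\hat w_i) + (B_{i,s_i}R_{i,s_i}-P_{i,s_i}\hat B_{i,s_i})\hat u_i$; the term $T_{i,s_i}\hat w_i$ in \eqref{interface} is precisely what removes the standalone $\hat w_i$ contribution, leaving only the mismatch $w_i-\hat w_i$. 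Evaluating the candidate at mode $s_i$ on this error, with $a=\bar A_{i,s_i}\xi_i$, $b=D_{i,s_i}(w_i-\hat w_i)$, $c=(B_{i,s_i}R_{i,s_i}-P_{i,s_i}\hat B_{i,s_i})\hat u_i$, and applying Young's inequality $2a^\top M b \le c_0\,a^\top M a + c_0^{-1}b^\top M b$ term-by-term with parameters $\epsilon_i$ on the two cross terms involving $a$ and $1$ on the $b,c$ cross term, the three-way split produces exactly the coefficient $(1+\tfrac1{\epsilon_i}+\epsilon_i)$ in front of $\xi_i^\top\bar A_{i,s_i}^\top M_{i,s_i}\bar A_{i,s_i}\xi_i$. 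This is where \eqref{cond2222} is consumed, yielding $V_{i,s_i}(x_i^+,\hat x_i^+)\le \kappa_i V_{i,s_i}(x_i,\hat x_i) + \rho_{i,\rm int}|w_i-\hat w_i|^2 + \rho_{i,\rm ext}|\hat u_i|^2$, where $\rho_{i,\rm int},\rho_{i,\rm ext}$ are the maxima over $s_i\in S_i$ of the $M_{i,s_i}$-weighted operator norms of $D_{i,s_i}$ and $B_{i,s_i}R_{i,s_i}-P_{i,s_i}\hat B_{i,s_i}$, scaled by $(2+\tfrac1{\epsilon_i})$ and $(2+\epsilon_i)$ respectively.

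Finally I would account for the switch $s_i\to s_i'$. Applying Assumption~\ref{cons_switch} at the successor state gives $V_{i,s_i'}(x_i^+,\hat x_i^+)\le \tau_i V_{i,s_i}(x_i^+,\hat x_i^+)$; combining with the previous bound and subtracting $V_{i,s_i}(x_i,\hat x_i)$ yields $V_{i,s_i'}(x_i^+,\hat x_i^+)-V_{i,s_i}(x_i,\hat x_i)\le -(1-\tau_i\kappa_i)V_{i,s_i}(x_i,\hat x_i)+\tau_i\rho_{i,\rm int}|w_i-\hat w_i|^2+\tau_i\rho_{i,\rm ext}|\hat u_i|^2$. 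The hypothesis $\tau_i\kappa_i<1$ makes $\lambda_i:=1-\tau_i\kappa_i\in(0,1)$, and rescaling the two gains by $\tau_i$ matches \eqref{sim_eq_nablaviest}, completing the verification. The main obstacle is the Young's-inequality bookkeeping that reproduces the exact coefficient $(1+\tfrac1{\epsilon_i}+\epsilon_i)$ of \eqref{cond2222} and correctly attributes each cross term to $\rho_{i,\rm int}$ versus $\rho_{i,\rm ext}$; the switching step is then routine via Assumption~\ref{cons_switch}.
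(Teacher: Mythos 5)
Your proposal is correct and follows essentially the same route as the paper: the output bound via \eqref{cond2_3} and \eqref{cond222} with $\alpha_i=1$, the error recursion $\bar A_{i,s_i}\xi_i + D_{i,s_i}(w_i-\hat w_i) + (B_{i,s_i}R_{i,s_i}-P_{i,s_i}\hat B_{i,s_i})\hat u_i$ obtained from \eqref{interface} and \eqref{cond2_1}--\eqref{cond2_2}, the Young splitting that produces the coefficient $(1+\epsilon_i+\tfrac{1}{\epsilon_i})$ consumed by \eqref{cond2222}, and Assumption~\ref{cons_switch} giving $\lambda_i = 1-\tau_i\kappa_i$. The only deviation is your choice of parameter $1$ on the $b$--$c$ cross term (the paper uses $\epsilon_i$ there, so all three squared terms carry the factor $1+\epsilon_i+\tfrac{1}{\epsilon_i}$, whereas you get $2+\tfrac{1}{\epsilon_i}$ and $2+\epsilon_i$ for $\rho_{i,\rm int}$ and $\rho_{i,\rm ext}$); this only changes the admissible constants, not the validity of the argument.
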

				\begin{proof}
					According to \eqref{cond2_3}, we have 
					\begin{align*}
						&| C_{i, s_i}x_i \!- \!\hat{C}_{i, s_i}\hat{x}_i| =\\
						&\big((x_i\!-\!P_{i, s_i}\hat C_{i, s_i})^\top \!C_{i, s_i}^\top \!C_{i, s_i}(x_i\!-\!P_{i, s_i}\hat C_{i, s_i})\big)^\frac{1}{2}.
					\end{align*}
					Using \eqref{cond222}, it is clear that $| {C_{i, s_i}x_i - \hat C_{i, s_i}\hat x_i}|^2 \le V_{i, s_i}({x_i},{{\hat x}_i})$ holds for all $x_i \in \X_i$, $\hat x_i \in \mathbb{\hat X}_i$.
					Then, \eqref{sim_eq_viest} is satisfied with $\alpha_i=1, i\in\N$, $p=2$. 
					
					Now, we proceed to show that \eqref{sim_eq_nablaviest} is satisfied, too. 
					
					Using Assumption \ref{cons_switch}, one gets the following inequality for all switchings $s'_i,s_i\in S_i$
					\begin{align}\label{ass}
						{V_{i, s'_i}}\left(f_{i,s_i}(x_i, w_i, u_{i}),\hat f_{i,s_i}(\hat x_i, \hat w_i, \hat u_i)\right)\leq \nonumber\\\tau_i 	{V_{i, s_i}}\left(f_{i,s_i}(x_i, w_i, u_{i}),\hat f_{i,s_i}(\hat x_i, \hat w_i, \hat u_i)\right).
					\end{align}
					Using the system dynamics \eqref{eq_ith_subsystem_lin} and the candidate simulation function in \eqref{sim_func}, the inequality \eqref{ass} can be written as
					\begin{align}\label{2nd}
						&V_{i,s_i'}\left(f_{i,s_i}(x_i, w_i, u_{i}),\hat f_{i,s_i}(\hat x_i, \hat w_i, \hat u_i)\right)
						\le \nonumber\\&\tau_i[A_{i, s_i}x_i + B_{i, s_i}u_{i} + D_{i, s_i}w_i\nonumber\\&- P_{i, s_i}(\hat A_{i, s_i}\hat x_i + \hat B_{i, s_i}\hat u_i + \hat D_{i, s_i}\hat w_i)]^\top M_{i, s_i}\nonumber\\&\times[A_{i, s_i}x_i + B_{i, s_i}u_{i} + D_{i, s_i}w_i\nonumber\\&- P_{i, s_i}(\hat A_{i, s_i}\hat x_i + \hat B_{i, s_i}\hat u_i + \hat D_{i, s_i}\hat w_i)].
					\end{align}	
				Substituting $u_i$ from \eqref{interface} and employing~\eqref{cond2_1} to~\eqref{cond2_2} yield
					\begin{align*}
						&V_{i,s_i'}\left(f_{i,s_i}(x_i, w_i, u_{i}),\hat f_{i,s_i}(\hat x_i, \hat w_i, \hat u_i)\right)
						\le \nonumber\\&\tau_i
						[(A_{i, s_i} + B_{i, s_i}K_{i, s_i})(x_i - P_{i, s_i}\hat x_i) + D_{i, s_i}(w_i - \hat w_i)\nonumber\\&+ (B_{i, s_i}R_{i, s_i} - P_{i, s_i}\hat B_{i, s_i})\hat u_i]^\top{M_{i, s_i}}\nonumber\\&\times[(A_{i, s_i} + B_{i, s_i}K_{i, s_i})(x_i - P_{i, s_i}\hat x_i) + D_{i, s_i}(w_i - \hat w_i)\nonumber\\&+ (B_{i, s_i}R_{i, s_i} - P_{i, s_i}\hat B_{i, s_i})\hat u_i].
					\end{align*}	
					Applying Young's inequality as $ab\leq \frac{\epsilon}{2}a^2+\frac{1}{2\epsilon}b^2$ for any $a,b\geq0$ and any $\epsilon>0$, we have \eqref{3rd cond}.
					
					\begin{figure*}[ht]
						\rule{\textwidth}{0.4pt}
						\begin{align}\label{3rd cond}
							&V_{i,s_i'}\left(f_{i,s_i}(x_i, w_i, u_{i}),\hat f_{i,s_i}(\hat x_i, \hat w_i, \hat u_i)\right)
							\nonumber\\&\le\tau_i\Bigg({({x_i} - {P_{i, s_i}}{{\hat x}_i})^\top}[{({A_{i, s_i}} + {B_{i, s_i}}{K_{i, s_i}})^\top} M_{i, s_i}({A_{i, s_i}} + {B_{i, s_i}}{K_{i, s_i}}) ]{({x_i} - {P_{i, s_i}}{{\hat x}_i})}\nonumber\\&\;\;\;
							+ [2{({x_i} - {P_{i, s_i}}{{\hat x}_i})^\top}{({A_{i, s_i}} + {B_{i, s_i}}{K_{i, s_i}})^\top}]{M_{i, s_i}}[{D_{i, s_i}}(w_i-\hat w_i)]\nonumber\\
							&\;\;\;+[2{({x_i} - {P_{i, s_i}}{{\hat x}_i})^\top}{({A_{i, s_i}} + {B_{i, s_i}}{K_{i, s_i}})^\top}]{M_{i, s_i}}[(B_{i, s_i}R_{i, s_i}-P_{i, s_i}\hat B_{i, s_i})\hat u_i]\nonumber\\
							&\;\;\;+[2(w_i-\hat w_i)^\top {D_{i, s_i}}^\top]{M_{i, s_i}}[(B_{i, s_i}R_{i, s_i}-P_{i, s_i}\hat B_{i, s_i})\hat u_i]\nonumber\\&\;\;\;+|\sqrt{ M_{i, s_i}}D_{i, s_i}(w_i-\hat w_i)|^2+|\sqrt{ M_{i, s_i}}(B_{i, s_i}R_{i, s_i}-P_{i, s_i}\hat B_{i, s_i})\hat u_i|^2\Bigg) \nonumber\\
							&\leq\tau_i\Bigg({({x_i} - {P_{i, s_i}}{{\hat x}_i})^\top}[{({A_{i, s_i}} + {B_{i, s_i}}{K_{i, s_i}})^\top} M_{i, s_i}({A_{i, s_i}} + {B_{i, s_i}}{K_{i, s_i}})]{({x_i} - {P_{i, s_i}}{{\hat x}_i})}\nonumber\\
							&\;\;\;+\epsilon_i{({x_i} - {P_{i, s_i}}{{\hat x}_i})^\top}[{({A_{i, s_i}} + {B_{i, s_i}}{K_{i, s_i}})^\top} M_{i, s_i}({A_{i, s_i}} + {B_{i, s_i}}{K_{i, s_i}})]{({x_i} - {P_{i, s_i}}{{\hat x}_i})}+\frac{1}{\epsilon_i}|\sqrt{ M_{i, s_i}}D_{i, s_i}(w_i-\hat w_i)|^2\nonumber\\&\;\;\;
							+\frac{1}{\epsilon_i}{({x_i} - {P_{i, s_i}}{{\hat x}_i})^\top}[{({A_{i, s_i}} + {B_{i, s_i}}{K_{i, s_i}})^\top} M_{i, s_i}({A_{i, s_i}} + {B_{i, s_i}}{K_{i, s_i}})]{({x_i} - {P_{i, s_i}}{{\hat x}_i})}\nonumber\\&\;\;\;
							+\epsilon_i|\sqrt{ M_{i, s_i}}(B_{i, s_i}R_{i, s_i}-P_{i, s_i}\hat B_{i, s_i})\hat u_i|^2+\frac{1}{\epsilon_i}|\sqrt{ M_{i, s_i}}(B_{i, s_i}R_{i, s_i}-P_{i, s_i}\hat B_{i, s_i})\hat u_i|^2+\epsilon_i|\sqrt{ M_{i, s_i}}D_{i, s_i}(w_i-\hat w_i)|^2\Bigg).
						\end{align}
						\rule{\textwidth}{0.4pt}
					\end{figure*}

					By employing \eqref{cond2222}, one has
					\begin{align*}
						&V_{i,s_i'}\left(f_{i,s_i}(x_i, w_i, u_i),\hat f_{i,s_i}(\hat x_i, \hat w_i, \hat u_i)\right)\leq\nonumber\\&   \tau_i\Bigg(\kappa_i V_{i,s_i}(x_i,\hat x_i) \nonumber\\&+(1+\frac{1}{\epsilon_i}+\epsilon_i)|\sqrt{ M_{i, s_i}}D_{i, s_i}|^2|w_i-\hat w_i|^2\nonumber\\
						&+(1+\frac{1}{\epsilon_i}+\epsilon_i)\Big|\sqrt{ M_{i, s_i}}(B_{i, s_i}R_{i, s_i}-P_{i,s_i}\hat B_{i, s_i})\Big|^2|\hat u_i|^2\Bigg). 
					\end{align*}
					Since $\tau_i\kappa_i<1$ by assumption, one can define $\hat{\kappa}_i=1-\tau_i\kappa_i$, and rewrite the previous inequality as follows 
					\begin{align*}
						&V_{i,s_i'}\left(f_{i,s_i}(x_i, w_i, u_i),\hat f_{i,s_i}(\hat x_i, \hat w_i, \hat u_i)\right)-V_{i,s_i}(x_i,\hat x_i)\leq \nonumber\\&  -\hat{\kappa}_i V_{i,s_i}(x_i,\hat x_i) \nonumber\\&+\tau_i(1+\frac{1}{\epsilon_i}+\epsilon_i)|\sqrt{ M_{i, s_i}}D_{i, s_i}|^2|w_i-\hat w_i|^2\nonumber\\&
						+\tau_i(1+\frac{1}{\epsilon_i}+\epsilon_i)\Big|\sqrt{ M_{i, s_i}}(B_{i, s_i}R_{i, s_i}-P_{i,s_i}\hat B_{i, s_i})\Big|^2|\hat u_i|^2. 
					\end{align*}
					Thus, \eqref{sim_eq_nablaviest} holds with $p=q=2$, $\lambda_i=\hat{\kappa}_i$, ${\rho _{\rm{i,ext}}}=\tau_i(1+\frac{1}{\epsilon_i}+\epsilon_i)\mathop {\max }\limits_{{s_i}} \{|\sqrt{ M_{i, s_i}}(B_{i, s_i}R_{i, s_i}-P_{i,s_i}\hat B_{i, s_i})|^2 \}$ and ${\rho _{{\mathop{i,\rm int}} }}=\tau_i(1+\frac{1}{\epsilon_i}+\epsilon_i)\mathop{\max }\limits_{{s_i}} \{|\sqrt{ M_{i, s_i}}D_{i, s_i}|^2\}$.
					
					Therefore, the candidate functions in \eqref{sim_func} are simulation functions from $\hat\Sigma_i$ to $\Sigma_i$, for all $ i\in\N$. 
				\end{proof}
				\section{Example}\label{sec:Example}
				To verify the effectiveness of our results, we apply them to a voltage regulation problem in AC islanded microgrids.
				
				Islanded microgrids are self-sufficient small-scale power grids composed of several Distributed Generation Units (DGUs).
				They are designed to operate safely and reliably in the absence of connection to the main grid~\cite{riverso2014plug}. When the microgrids are working in connected mode, voltage and frequency are set by the main grid. However, in the islanded mode, they must be controlled by DGUs. Therefore, their connection should be robust against line faults or variations in the topology of DGUs' connections.
				Treating time-varying communication topologies is beneficial to evaluate the system performance in the presence of the line switches or plug-and-play operations.
				
				We consider a switched AC islanded microgrid network modeled by an interconnection of fourth-order DGUs as underlying subsystems.
				In particular, we consider two \emph{circular} topologies as shown in Figures~\ref{block1} and~\ref{block2} and assume that the network topology switches between these two configurations at certain times.
				Let ${\sigma_i(k)}$ be the switching signal which takes values in the set $\{1,2\}$, where $\sigma_i(k) = 1$ corresponds to the topology shown in Figure~\ref{block1} and $\sigma_i(k) = 2$ pertains to that in Figure~\ref{block2}.
				
				The discrete-time dynamics of each DGU in the microgrid with sampling time $t_s$  is described by~\eqref{power_subsystem}, adapted from~\cite{riverso2014plug}. 
				\begin{figure*}[ht]
					\rule{\textwidth}{0.4pt}
					\begin{align}\label{power_subsystem}
						\Sigma _i:\!\left\{ \begin{array}{l}
							\underbrace{\left[ \begin{array}{*{20}{c}}
									\mathbf{V}_{i,d} (k + 1) \\
									\mathbf{V}_{i,q} (k + 1) \\
									\mathbf{I}_{ti,d} (k + 1) \\
									\mathbf{I}_{ti,q} (k + 1)\\
									\nu_{i,d}(k+1)\\
									\nu_{i,q}(k+1)
								\end{array} \right]}_ {=:\mathbf x_i (k+1)}\!\!=\!\! \underbrace{\left[ \begin{array}{*{20}{c}}
								-\frac{ t_s}{C_{ti}}(\sum\nolimits_{j \in {I_{i,{\sigma_i(k)}}^{\mathrm{in}}}} {\frac{{{R_{ij}}}}{{Z_{ij}^2}}} )+1& t_s\omega_0-\frac{ t_s}{C_{ti}}(\sum\nolimits_{j \in {I_{i,{\sigma_i(k)}}^{\mathrm{in}}}} {\frac{{{X_{ij}}}}{{Z_{ij}^2}}} )&\frac{ t_sk_i}{C_{ti}}&0&0&0\\
								- t_s\omega_0+\frac{ t_s}{C_{ti}}(\sum\nolimits_{j \in {I_{i,{\sigma_i(k)}}^{\mathrm{in}}}} {\frac{{{X_{ij}}}}{{Z_{ij}^2}}} )&	-\frac{ t_s}{C_{ti}}(\sum\nolimits_{j \in {I_{i,{\sigma_i(k)}}^{\mathrm{in}}}} {\frac{{{R_{ij}}}}{{Z_{ij}^2}}} )+1&0&\frac{ t_sk_i}{C_{ti}}&0&0\\
								-\frac{ t_sk_i}{L_{ti}}&0&-\frac{ t_sR_{ti}}{L_{ti}}+1& t_s\omega_0&0&0\\
								0&-\frac{ t_sk_i}{L_{ti}}&- t_s\omega_0&-\frac{ t_sR_{ti}}{L_{ti}}+1&0&0\\-t_s&0&0&0&1&0\\0&-t_s&0&0&0&1\\
							\end{array} \right]}_{=:A_{i,{\sigma_i (k)}}} \\
						\;\;\;\;\;\;\;\;\;\;\;\;\;\;\;\;\;\;\;\;\;\;\;\;\;\;
						\times\underbrace{\left[ \begin{array}{*{20}{c}}
								\mathbf{V}_{i,d}(k)\\
								\mathbf{V}_{i,q}(k)\\
								\mathbf{I}_{ti,d} (k) \\
								\mathbf{I}_{ti,q} (k)\\
								\nu_{i,d}(k)\\
								\nu_{i,q}(k)
							\end{array} \right]}_{=:\mathbf x_i (k)}+ \underbrace{\frac{ t_s}{C_{ti}} \sum\nolimits_{j \in {I_{i,{\sigma_i(k)}}^{\mathrm{in}}}}\left[ {\begin{array}{*{20}{c}}
								{\frac{{{R_{ij}}}}{{Z_{ij}^2}}}&  {\frac{{{X_{ij}}}}{{Z_{ij}^2}}}\\
								-{\frac{{{X_{ij}}}}{{Z_{ij}^2}}}& {\frac{{{R_{ij}}}}{{Z_{ij}^2}}}\\
								0&0\\0&0\\0&0\\0&0
							\end{array}} \right]}_{=:D_{i,{\sigma_i(k)}}} \,\,\underbrace{\left[ \begin{array}{*{20}{c}}
							\mathbf{V}_{j,d}(k)\\
							\mathbf{V}_{j,q}(k)
						\end{array} \right]}_{=: \mathbf w_i (k)}\\
					\;\;\;\;\;\;\;\;\;\;\;\;\;\;\;\;\;\;\;\;\;\;\;\;\;\;
					+\underbrace{\left[ {\begin{array}{*{20}{c}}
								-{\frac{1}{C_{ti}}}&0&0&0\\
								0& -{\frac{1}{C_{ti}}}&0&0\\
								0&0&0&0\\
								0&0&0&0\\
								0&0&t_s&0\\0&0&0&t_s\\
							\end{array}} \right]}_{=:H_{i}} \,\,\underbrace{\left[ \begin{array}{*{20}{c}}
							\mathbf{I}_{Li,d}\\
							\mathbf{I}_{Li,q}\\
							\mathbf{y}_{i,d,ref}\\
							\mathbf{y}_{i,q,ref}\\
						\end{array} \right]}_{=: \mathbf d_i}
					+ \underbrace{\left[ {\begin{array}{*{20}{c}}
								0&0\\
								0&0\\
								\frac{ t_s}{L_{ti}}&0\\
								0&\frac{t_s}{L_{ti}}\\
								0&0\\
								0&0
							\end{array}} \right]}_{=:B_{i}} {\mathbf u}_{i} (k),\\
						{\mathbf y_{i}}(k) = \underbrace{\left[ \begin{array}{*{20}{c}}
								1&0&0&0&0&0\\
								0&1&0&0&0&0
							\end{array} \right]}_{=:C_{i}} \left[ \begin{array}{*{20}{c}}
							\mathbf{V}_{i,d}(k)\\
							\mathbf{V}_{i,q}(k)\\
							\mathbf{I}_{ti,d} (k) \\
							\mathbf{I}_{ti,q} (k)\\
							\nu_{i,d}(k)\\
							\nu_{i,q}(k)
						\end{array} \right].
					\end{array} \right.
				\end{align}
				\rule{\textwidth}{0.4pt}
			\end{figure*}
			In \eqref{power_subsystem}, $\mathbf{V}_{i,d}$ (resp. $\mathbf{V}_{i,q}$) are the $d$ (resp. $q$) components of the load voltage. Similarly, $\mathbf{I}_{ti,d}$ (resp. $\mathbf{I}_{ti,q}$) denote the $d$ (resp. $q$) components of the current of DGU $\Sigma_i$. 
			In addition, the integrators $\nu_{i,d}$, $\nu_{i,q}$ are added for disturbance rejection reasons~\cite{riverso2014plug}.
			The control inputs (the voltage of corresponding voltage source converter (VSC)) and outputs are denoted by $u_i(k)=\left[ {\mathbf{V}_{ti,d}(k),\mathbf{V}_{ti,q}(k)} \right]^\top$ and $\mathbf{y}_i(k)=\left[ {\mathbf{V}_{i,d}(k),\mathbf{V}_{i,q}(k)} \right]^\top$, respectively.
			Furthermore, $D_{i,{\sigma_i(k)}}\mathbf w_i (k)$ models the coupling of DGU $\Sigma_i$ with its neighbors $\Sigma_j$, $j\in I_{i,{\sigma_i(k)}}^{\mathrm{in}}$, corresponding to each switching mode. In addition, $H_{i}\mathbf d_i$ represents the collection of load currents $I_{Li,d}$ and $I_{Li,q}$ which are considered as constant exogenous inputs acting as a disturbance and tracking references $\mathbf{y}_{i,ref}=\left[ {\mathbf{y}_{i,d,ref},\mathbf{y}_{i,q,ref}} \right]^\top$.
			
The parameters $R_{ti}$, $L_{ti}$ are the resistance and inductance, respectively, corresponding to DGU $\Sigma_i$ and $R_{ij}$, $L_{ij}$ are those of the line between DGU $\Sigma_i$ and DGU $\Sigma_j$ which are connected through a three-phase line.
In addition, $X_{ij}=\omega_0L_{ij}$ and $Z_{ij}=|R_{ij}+jX_{ij}|$ with the rotation speed $\omega_0$. Moreover, $k_i$ is the transformer ratio which connects DGU $\Sigma_i$ to the remainder of the network.
			The other transformer parameters are included in $R_{ti}$ and $L_{ti}$. A shunt capacitance $C_{ti}$ is used for attenuating the impact of high-frequency harmonics of the load voltage. 
			
			The interconnection structure switches between two \emph{circular} topologies shown in Figures~\ref{block1} and~\ref{block2}.
			In these topologies, each subsystem $\Sigma_i$ is fed by subsystems $\Sigma_{i-1}$ for $\sigma_i(k) = 1$ ($I_{i,1}^{\mathrm{in}}=\{i-1\}$) and $\Sigma_{i+1}$ for $\sigma_i(k) = 2$ ($I_{i,2}^{\mathrm{in}}=\{i+1\}$), respectively.
			\begin{figure}
				\centering  
				\includegraphics[width=7.5cm]{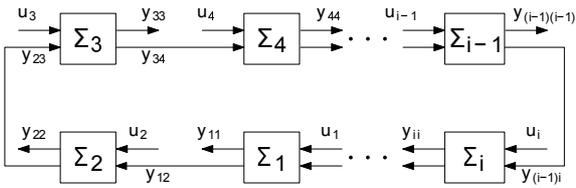}
				\caption{The interconnected system $\Sigma$ for $s_i=1$.}\label{block1}
			\end{figure}
			
			\begin{figure}
				\centering  
				\includegraphics[width=7.5cm]{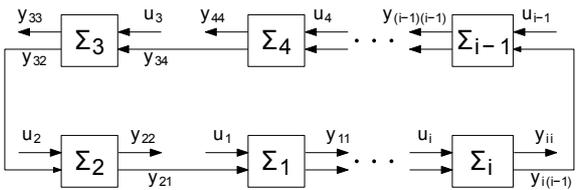}
				\caption{The interconnected system $\Sigma$ for $s_i=2$.}\label{block2}
			\end{figure}

			We denote $\Sigma$ as the augmented infinite network consisting of infinite subsystems $\Sigma_i$. To construct an overall abstraction for $\Sigma$, we construct abstractions of subsystems $\Sigma_i$, $i\in \N$, with dimensions $\hat n_i$ for both $s_i =1,2$.
			Necessary and sufficient conditions on the geometrical properties of the involved matrices ${{P}_{i, s_i}}$, ${{D}_{i, s_i}}$, ${{A}_{i, s_i}}$, ${{B}_{i, s_i}}$ in~\eqref{cond2} are provided in \cite[Sec. 4.3]{rungger2016compositional}, which determine the lowest possible state dimension for $\hat\Sigma_i$, $i\in\N$, as $\hat n_i=3$. 
			
			Now we compute the abstraction matrices satisfying~\eqref{cond2}.
			Considering~\eqref{cond2_1} and~\eqref{cond2_2} and  taking ${{\hat D}_{i, s_i}}= t_s\left[ {\begin{array}{*{20}{c}}
				1&0&1\\
				0&1&0\\
				\end{array}} \right]^\top$ and ${T_{i, s_i}}=0$, we get 
			\begin{align*}
				&P_{i,s_i}=\frac{1}{2C_{ti}} \sum\nolimits_{j \in {I_{i,{s_i}}}} \left[ {\begin{array}{*{20}{c}}
						0&0&0&0&1&1\\
						{\frac{X_{ij}}{{Z_{ij}^2}}}&{\frac{{{R_{ij}}}}{{Z_{ij}^2}}}&0&0&0&0\\
						{\frac{{{R_{ij}}}}{{Z_{ij}^2}}}&{-\frac{X_{ij}}{{Z_{ij}^2}}}&0&0&-1&-1\\
					\end{array}} \right]^\top ,\\&
					Q_{i,s_i} =\frac{t_s k_i}{2C_{ti}} \sum\nolimits_{j \in {I_{i,{s_i}}}}\left[ {\begin{array}{*{20}{c}}
							0&{\frac{-X_{ij}}{{Z_{ij}^2}}}&{\frac{-R_{ij}}{{Z_{ij}^2}}}\\
							0&{\frac{-R_{ij}}{{Z_{ij}^2}}}&{\frac{X_{ij}}{{Z_{ij}^2}}}
						\end{array}} \right], s_i=1,2.
					\end{align*}
					Furthermore, $\hat A_{i,{s_i}}$ is obtained by solving $\hat n_i\times\hat n_i$ equations provided that matrix $\sum\nolimits_{j \in {N_{i,{s_i}}}}\left[ {\begin{array}{*{20}{c}}
						{\frac{{{R_{ij}}}}{{Z_{ij}^2}}}&{\frac{-X_{ij}}{{Z_{ij}^2}}}\\
						{\frac{X_{ij}}{{Z_{ij}^2}}}&{\frac{{{R_{ij}}}}{{Z_{ij}^2}}}
						\end{array}} \right]$ is invertible.  
					
					In addition, $\hat C_{i, s_i}=\frac{1}{2C_{ti}} \sum\nolimits_{j \in {N_{i,{s_i}}}} \left[ {\begin{array}{*{20}{c}}
						0&{\frac{X_{ij}}{{Z_{ij}^2}}}&{\frac{{{R_{ij}}}}{{Z_{ij}^2}}}\\
						0&{\frac{{{R_{ij}}}}{{Z_{ij}^2}}}&{-\frac{X_{ij}}{{Z_{ij}^2}}}\\
						\end{array}} \right]$. 
					By considering the computed matrices $\hat A_{i, s_i}$ and taking $\hat B_{i, s_i}=I_{\hat n_i}$, we choose appropriate matrices $\hat K_{i, s_i}$ for local controllers $\hat u_i=-\hat K_{i, s_i}\hat x_i$, which stabilize abstract subsystems $\hat\Sigma_i$ at the origin.
					
					We also choose $R_{i, s_i}=(B_{i, s_i}^\top M_{i, s_i}B_{i, s_i})^{-1} B_{i, s_i}^\top M_{i, s_i}P_{i,s_i}\hat B_{i, s_i}$ to minimize $\rho_{\rm{i,ext}}$ as suggested in~\cite{girard2009hierarchical}.
					
					We illustrate the \emph{scale-free} property of our approach with respect to the size of network via simulations.
					Following Theorem~\ref{MTC2}, we consider three truncated networks of microgrids shown in Figures~\ref{block1} and~\ref{block2}, respectively, consisting of $10^2$, $10^3$ and $10^4$ subsystems.
					The parameters are set as $R_{ti}=1.5\rm{m\Omega }$, $L_{ti}=300 \rm{\mu H}$, $C_{ti}=460 \rm{\mu F}$, $k_i=1$ for all subsystems $\Sigma_i$.
					Additionally, we choose $R_{ij}=1\rm{m\Omega }$, $L_{ij}=10\rm{mH}$ for all subsystems $\Sigma_i,\Sigma_j$ with $s_i=1$ and $R_{ij}=1.2\rm{m\Omega }$, $L_{ij}=8\rm{mH}$ for all subsystems $\Sigma_i,\Sigma_j$ with $s_i=2$.
					The microgrids frequency and the sampling time are set as $f_0=60\rm{Hz}$ and $t_s=10^{-4}\rm{s}$, respectively.				
					The switchings between $s_i=1$ and $s_i=2$ occur at time steps $k=4n$, $n\in \N$.
We choose $\kappa_i=0.01$ and take matrices $K_{i,s_i}$ such that the eigenvalues of pairs $(A_{i,s_i}, B_{i,s_i})$ in closed loop are $\left[0.3;0.15;0.6;0.2;0.4;0.5 \right]$ for both $s_i =1,2$.
Then, we compute $M_{i,{s_i}}=\left[ {\begin{array}{*{20}{c}}
						12.291 &-0.473&11.082&-1.081&7.809&-1.665\\
						*&23.041&0.535&17.258&1.780&-0.585\\
						*&*&37.993&26.610&-0.607&-0.458\\
						*&*&*&47.840&0.536&-0.358\\
						*&*&*&*&20.913&2.330\\
						*&*&*&*&*&23.559
						\end{array}} \right]$, $s_i=1$, and $M_{i,{s_i}}=\left[ {\begin{array}{*{20}{c}}
						13.013 &-0.801&10.235&-2.132&6.296&-1.819\\
						*&25.612&0.669&15.228&1.171&-1.091\\
						*&*&38.619&24.581&-1.167&-0.915\\
						*&*&*&49.145&0.251&-0.958\\
						*&*&*&*&22.323&3.244\\
						*&*&*&*&*&25.158
						\end{array}} \right]$, $s_i=2$, satisfying~\eqref{cond22}.
					
					With the choice of~\eqref{sim_func} for $V_{i,s_i}$, we get $\tau_i\leq \max \{ \frac{{{\lambda _{\max }}({M_{i,s_i}})}}{{{\lambda _{\min }}({M_{i,s_i}})}}\}$ for $s_i=1,2$. Thus, $1\leq\tau_i\leq 67.61$. Therefore, the parameters in Definition~\ref{SIM_vi_existence} satisfying Assumption~\ref{ass_external_gains} are as $\alpha_i=1$, $\lambda_i=\hat{\kappa_i}\in \left[ {0.3239,0.99} \right]$, $\epsilon_i=1$, $\rho_{i,\rm int}\leq 0.321$, and $\rho_{\rm{i,ext}}\leq 512.312$.
					Recalling the circular interconnection topologies, each subsystem is directly fed by one other subsystem at each time instant.
					Thus, \eqref{gamma} gives $\gamma_{ij}=\tau_i(1+\epsilon_i+\frac{1}{\epsilon_i})\mathop{\max }\limits_{{s_i}} \{|\sqrt{M_{i, s_i}}D_{i, s_i}|^2\}{\bar N_i}\frac{1}{\alpha_j}$
					for $j\in I_{i.s_i}^{\mathrm{in}}$ and $\gamma_{ij}=0$ for $j \notin {I_{i.s_i}^{\mathrm{in}}}$.
					Then, we get
					\begin{align*}
						r(\Psi) &<\sup_{j \in \N} \sum_{i=1}^{\infty} \psi_{ij}<(1+\epsilon_i+\frac{1}{\epsilon_i})\mathop{\max }\limits_{{s_i}} \{|\sqrt{M_{i, s_i}}D_{i, s_i}|^2\}\frac{
							\tau_i}{\hat\kappa_i}\\&\leq 0.991,
					\end{align*}
					which implies the satisfaction of Assumption \ref{ass:spectral-radius} on the spectral radius condition. 
					Therefore, all the hypotheses of Theorem~\ref{MTC} are satisfied.
					
					The norm of the overall error between the output trajectories of the \emph{abstract} and \emph{concrete} systems for three different sizes of networks are shown in Figures~\ref{xx}.
					From the choice of $\hat u$ and stabilizability of $\hat\Sigma$ at the origin, $\lim_{k\to\infty} |\hat{\mathbf u}(k)|_2 \to 0$.
					This together with~\eqref{sim3} implies that the mismatch between output trajectories converges to zero, illustrated by Figure \ref{xx}.
					The reference signals of DGUs are set as $\mathbf{y}_{i,ref}=\left[ {0.8,0.2} \right]^\top$, $i\in\N$.
					The closed-loop output trajectories of the \emph{concrete} subsystems in a set-point tracking  scenario are depicted by Figure~\ref{xx2} in per unit system.
					From Figure~\ref{xx2}, one can see that the overall behavior of the network remains almost identical, though the network size grows dramatically.
					This admits that performances indices are independent of the network size.
					
					\begin{figure*}
						\centering
						\begin{subfigure}[b]{0.65\columnwidth}
							\includegraphics[width=\textwidth,height=4cm]{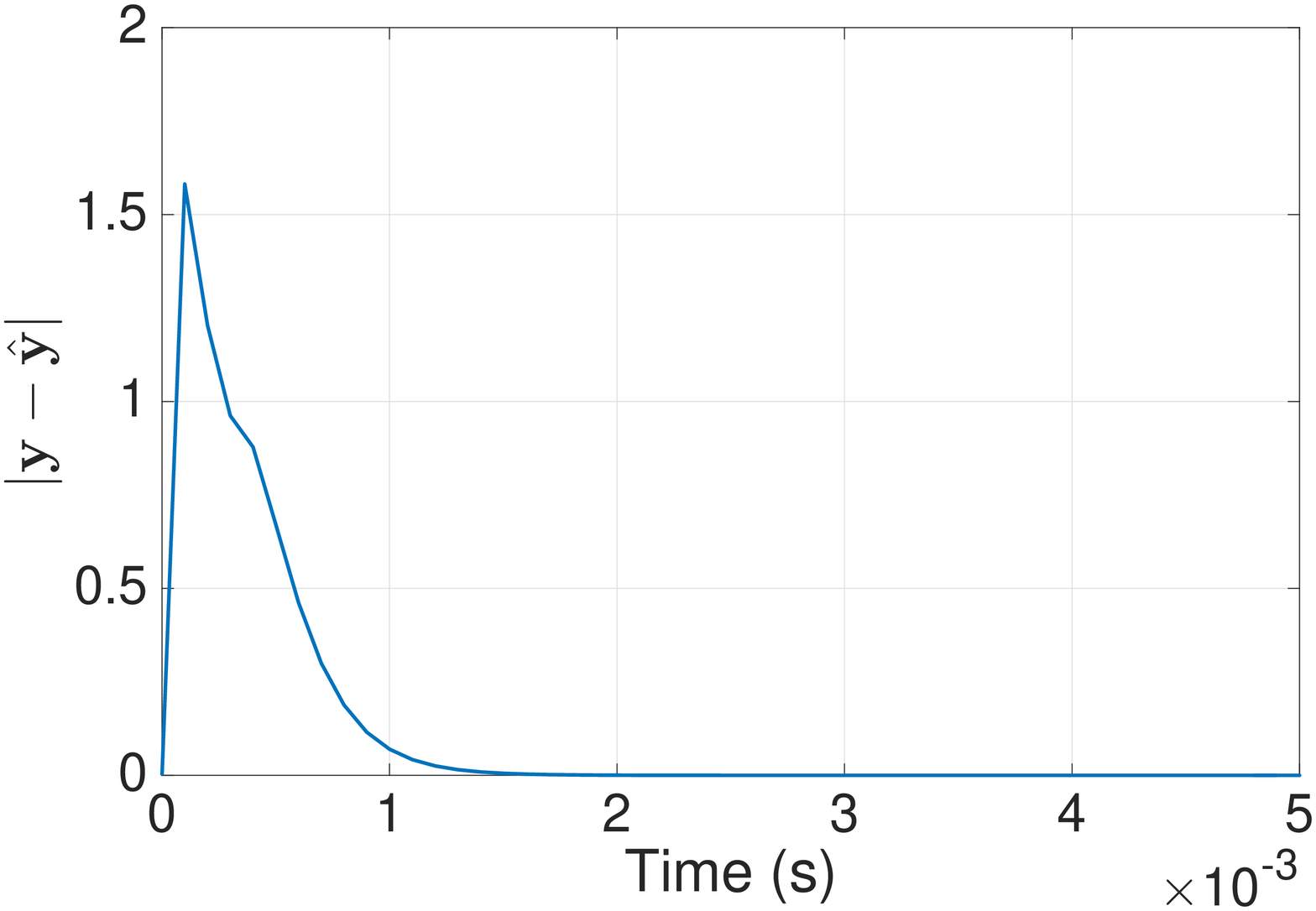}
							\caption{$i={1,\dots,10^2}$}
							\label{60err}
						\end{subfigure}
						\begin{subfigure}[b]{0.65\columnwidth}
							\includegraphics[width=\textwidth,height=4cm]{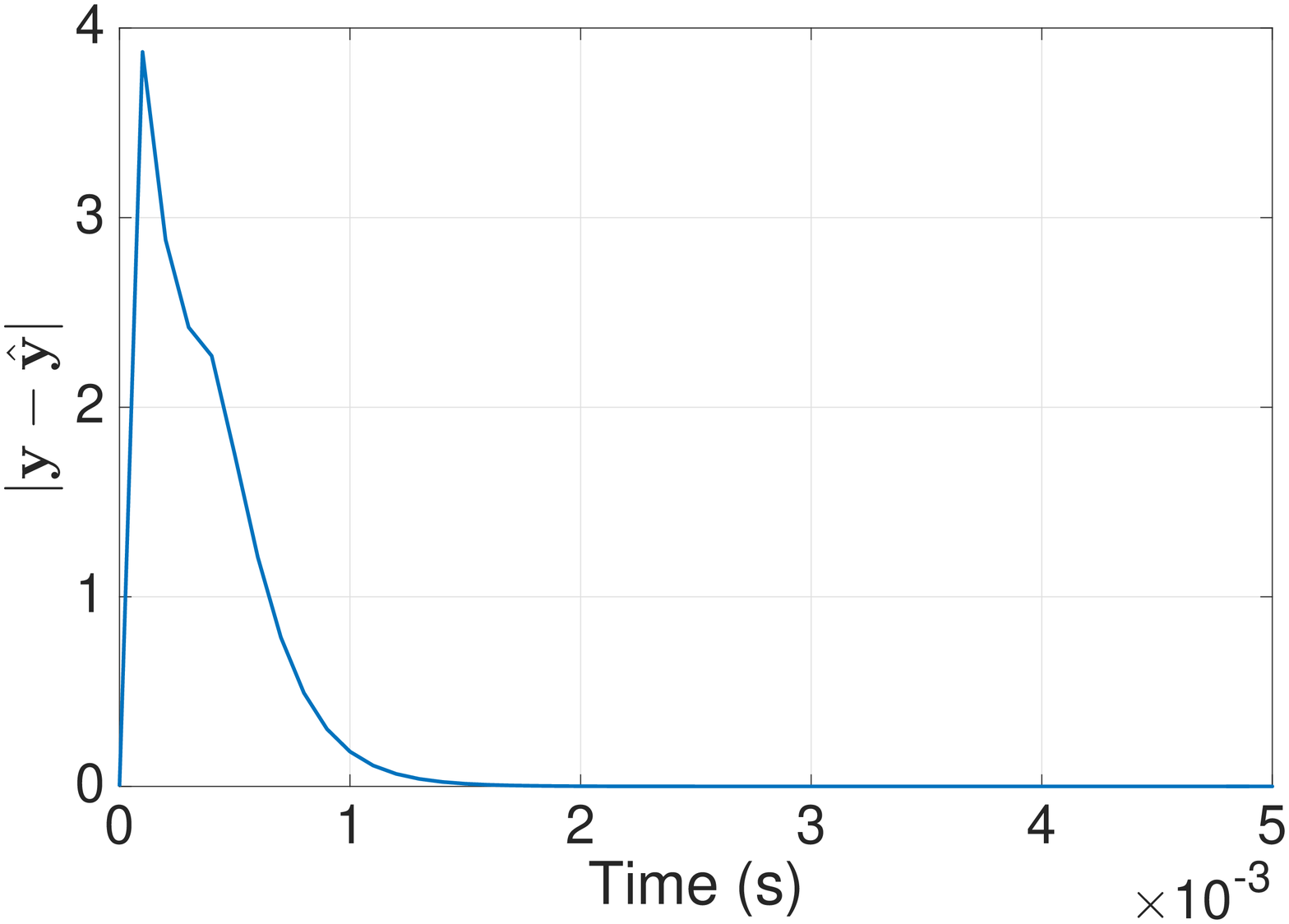}
							\caption{$i={1,\dots,10^3}$}
							\label{600err}
						\end{subfigure}
						\begin{subfigure}[b]{0.65\columnwidth}
							\includegraphics[width=\textwidth,height=4cm]{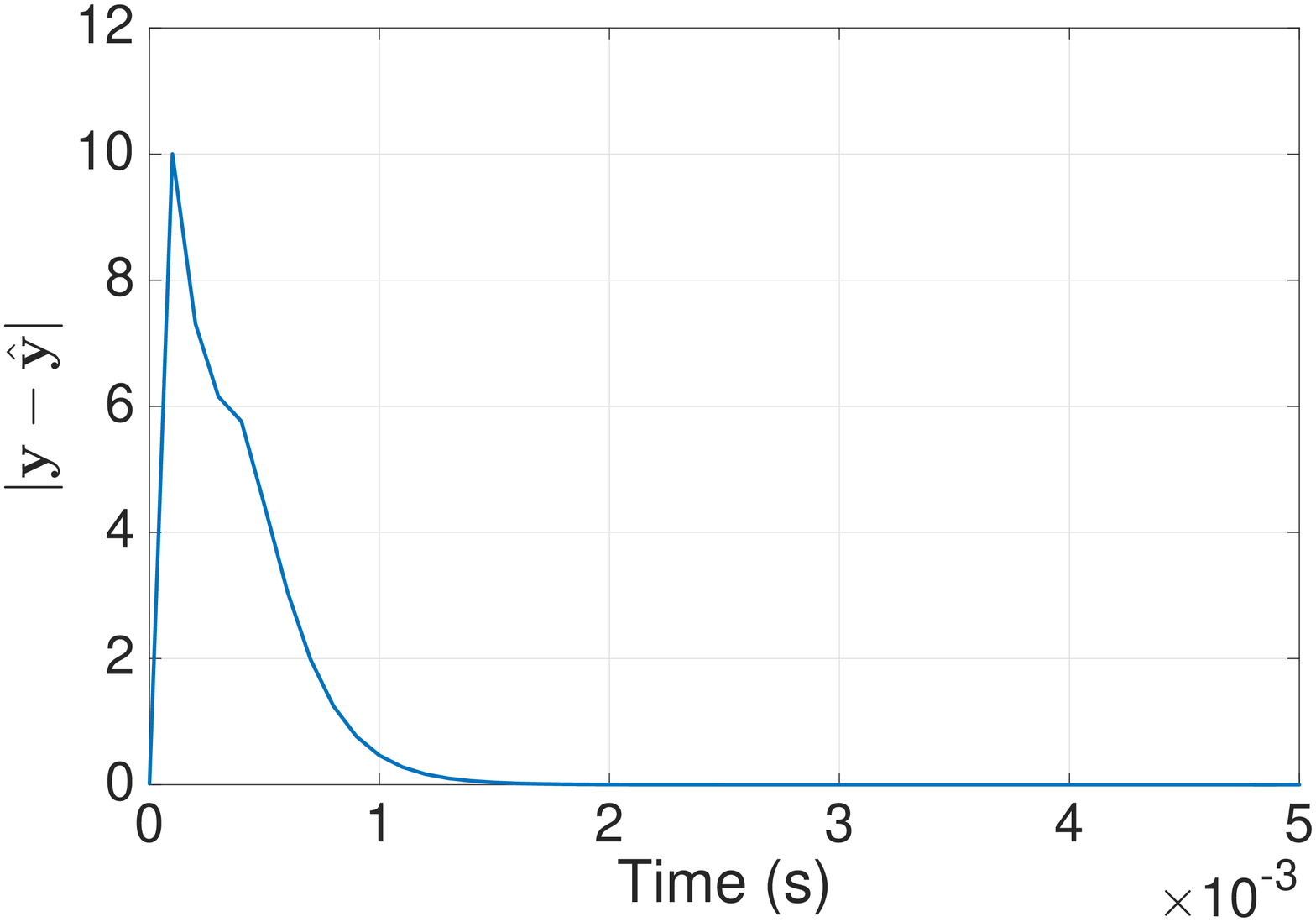}
							\caption{$i={1,\dots,10^4}$}
							\label{6000err}
						\end{subfigure}
						\caption{The error norm between the output trajectories of $\Sigma$ and $\hat \Sigma$ in per unit system.}\label{xx}
					\end{figure*}

					\begin{figure*}
						\centering
						\begin{subfigure}[b]{0.65\columnwidth}
							\includegraphics[width=\textwidth,height=4cm]{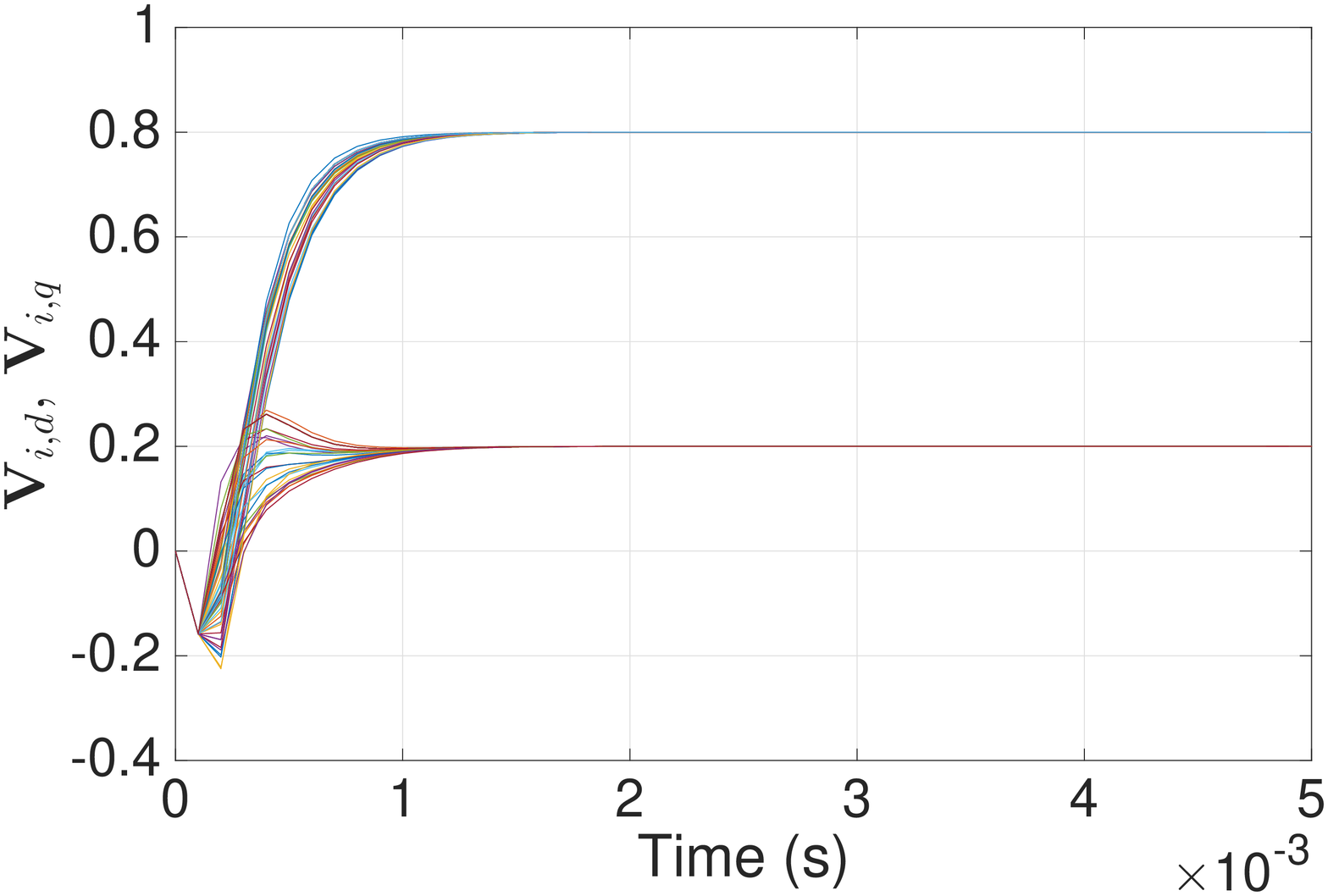}
							\caption{$\i={1,\dots,10^2}$}
							\label{60sd}
						\end{subfigure}
						\begin{subfigure}[b]{0.65\columnwidth}
							\includegraphics[width=\textwidth,height=4cm]{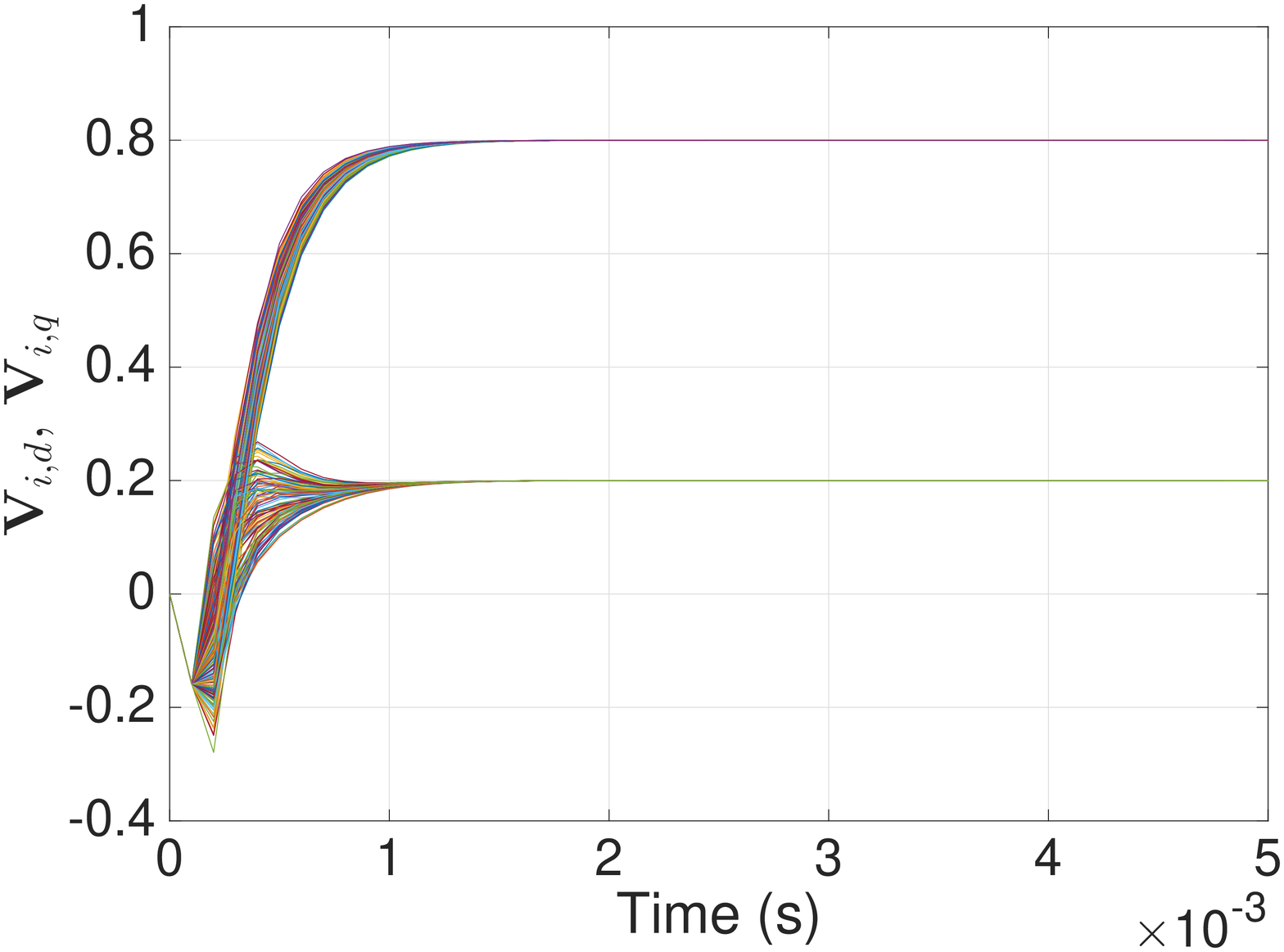}
							\caption{$i={1,\dots,10^3}$}
							\label{600sd}
						\end{subfigure}
						\begin{subfigure}[b]{0.65\columnwidth}
							\includegraphics[width=\textwidth,height=4cm]{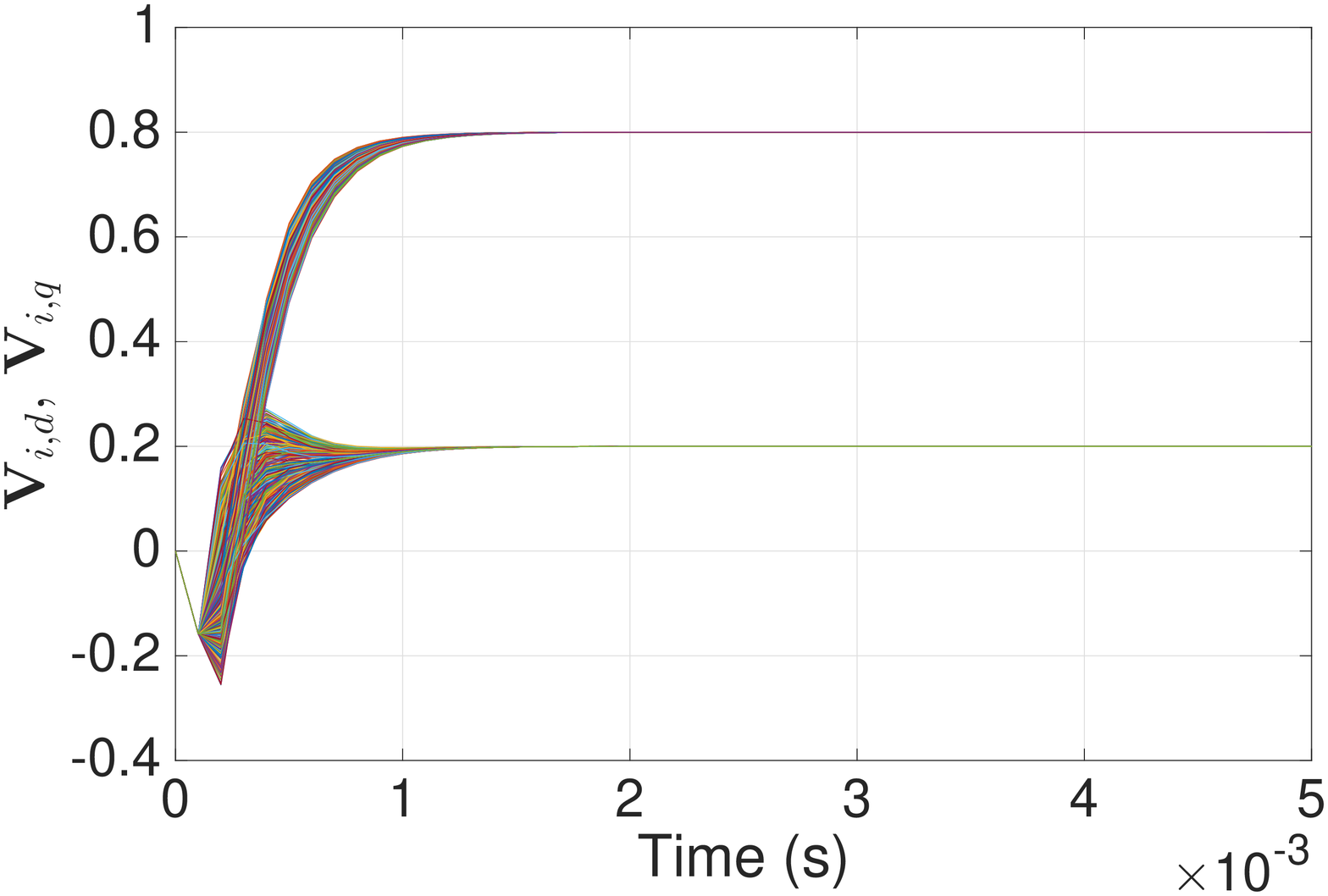}
							\caption{$i={1,\dots,10^4}$}
							\label{6000sd}
						\end{subfigure}
						\caption{The external outputs $\mathbf{V}_{i,d},\;\mathbf{V}_{i,q}$ in per unit system.  }\label{xx2}
					\end{figure*}
					
					\section{Conclusions}\label{sec:Conclusions}
					
					We proposed a compositional approach on the construction of continuous abstractions for infinite networks of switched discrete-time systems with arbitrary switching signals.
					To do this,	we extended the notion of simulation functions to infinite-dimensional systems (networks of infinitely many finite-dimensional switched systems).
					Following the compositionality approach, we assigned to each subsystem an individual simulation function and constructed its local abstraction accordingly.
					Finally, we composed local abstractions to provide an abstraction of the overall network.
					We showed that the aggregation yields a continuous abstraction of the overall concrete network if a small-gain condition, expressed in terms of a spectral radius criterion, is satisfied.
					We also established that our result leads to scale-free compositional method for any finite-but-arbitrarily large networks.
					For linear systems, our conditions for constructing local abstractions boil down to  some linear matrix inequality conditions which can be computed efficiently.
					We applied our results to AC islanded microgrids under switched topologies and showed the scale-freeness of our proposed approach.
					

					\bibliographystyle{IEEEtran}
					\bibliography{referencesj}   
					
				\end{document}